\newtheorem{theorem}{Theorem}
\newtheorem{lemma}{Lemma}
\newtheorem{assumption}{Assumption}
\newtheorem{proposition}{Proposition}
\pgfplotsset{compat=1.17}
\title{A Risk-Aware Adaptive Robust MPC with Learned Uncertainty Quantification}
\author{\IEEEauthorblockN{Mingcong Li}
\IEEEauthorblockA{School of Automation\\
Beijing Institute of Technology\\
Beijing, China\\
limingcong0627@163.com}}
\begin{document}
\maketitle

\begin{abstract}
Solving chance-constrained optimal control problems for systems subject to non-stationary uncertainties is a significant challenge.    Conventional robust model predictive control (MPC) often yields excessive conservatism by relying on static worst-case assumptions, while standard stochastic MPC methods struggle when underlying uncertainty distributions are unknown a priori.  This article presents a Risk-Aware Adaptive Robust MPC (RAAR-MPC) framework, a hierarchical architecture that systematically orchestrates a novel synthesis of proactive, learning-based risk assessment and reactive risk regulation.  The framework employs a medium-frequency risk assessment engine, which leverages Gaussian process regression and active learning, to construct a tight, data-driven characterization of the prediction error set from operational data.  Concurrently, a low-timescale outer loop implements a self-correcting update law for an adaptive safety margin to precisely regulate the empirical risk and compensate for unmodeled dynamics.  This dual-timescale adaptation enables the system to rigorously satisfy chance constraints with a user-defined probability, while minimizing the conservatism inherent in traditional approaches.   We formally establish that the interplay between these adaptive components guarantees recursive feasibility and ensures the closed-loop system satisfies the chance constraints up to a user-defined risk level with high probability.   Numerical experiments on a benchmark DC-DC converter under non-stationary parametric uncertainties demonstrate that our framework precisely achieves the target risk level, resulting in a significantly lower average cost compared to state-of-the-art robust and stochastic MPC strategies.
\end{abstract}

\begin{IEEEkeywords}
Model Predictive Control, Robust Control, Machine Learning, Gaussian Processes, Adaptive Control
\end{IEEEkeywords}

\section{INTRODUCTION}

Model Predictive Control (MPC) has established itself as a cornerstone of modern control theory, demonstrating remarkable success in handling multivariable systems with complex dynamics and operational constraints across a wide array of industrial applications \cite{rawlings2017model, mayne2016robust, forbes2015model}. The efficacy of MPC stems from its core principle: repeatedly solving a finite-horizon optimal control problem online. This optimization-centric nature, however, renders its performance fundamentally dependent on the accuracy of the prediction model. The unavoidable presence of model uncertainties and external disturbances in real-world systems has thus spurred a rich and diverse field of research dedicated to robust MPC.

Traditional robust MPC addresses uncertainty through a worst-case lens, seeking to provide deterministic guarantees for constraint satisfaction and stability across all possible uncertainty realizations \cite{kothare1996robust, boyd1994linear}. This paradigm has led to powerful formulations based on dynamic programming \cite{scokaert1998min} and sophisticated convex optimization techniques, where complex min-max problems can be cast as structured programs like Quadratically Constrained Quadratic Programs (QCQP) or solved using semi-infinite programming \cite{diehl2007formulation, ganguly2024exact}. A key distinction within these methods is the treatment of feedback, leading to a well-established dichotomy between open-loop and closed-loop formulations \cite{lee1997worst, bemporad2002explicit}. Despite their theoretical rigor, worst-case approaches are often criticized for their inherent conservatism, as control actions are dictated by the most extreme, and often improbable, uncertainty scenarios. This conservatism can substantially degrade nominal performance and may even render the control problem infeasible.

To overcome this limitation, Stochastic Model Predictive Control (SMPC) offers a more nuanced and often more practical alternative. Instead of demanding absolute constraint satisfaction, SMPC recasts the problem using chance constraints, which require constraints to be satisfied with a user-specified, high probability \cite{lorenzen2017constraint, mesbah2016stochastic}. This probabilistic framing is particularly well-suited for applications where occasional, minor constraint violations are acceptable, such as in building climate control or chemical process management \cite{paulson2018nonlinear, bradford2020stochastic}. SMPC aims to systematically balance performance optimization against the risk of constraint violation, thereby enabling operation closer to the true operational limits and improving overall system efficiency. The central challenge within SMPC then becomes the tractable reformulation and solution of the chance-constrained optimal control problem.

The literature on SMPC is largely divided into two main schools of thought: analytical reformulations and sampling-based methods. Analytical approaches aim to convert probabilistic chance constraints into deterministic ones. For linear systems with Gaussian noise, this can often be done exactly \cite{farina2016stochastic, lorenzen2017constraint}. A prominent and effective paradigm in this category is the tube-based MPC \cite{mayne2011tube, blanchini1999set}. Here, the system state is confined within a "tube" around a nominal path, and the constraints are then tightened based on the tube's dimensions. While initially developed with fixed tubes, more advanced methods construct tubes based on incremental Lyapunov functions or control contraction metrics to reduce conservatism \cite{kohler2021computationally, manchester2017control}. A particularly relevant development involves the use of Probabilistic Reachable Sets (PRS), which can be computed offline via sampling to serve as probabilistic tubes, providing a bridge between the two main SMPC approaches \cite{hewing2020scenario}. However, the effectiveness of these methods depends critically on the characterization of the uncertainty, and they often assume stationary noise statistics, which may not hold in practice.

On the other hand, sampling-based methods, most notably the \textbf{scenario approach}, approximate the chance-constrained problem by enforcing constraints only for a finite number of randomly drawn uncertainty samples, or "scenarios" \cite{calafiore2006scenario, tempo2005randomized}. Rooted in statistical learning theory \cite{vidyasagar1997theory}, this technique transforms the stochastic problem into a deterministic convex program, for which strong probabilistic guarantees on the feasibility of the resulting solution can be established \cite{calafiore2013robust, bernardini2009scenario, campi2011sampling}. The scenario approach has been successfully extended to handle non-convex problems \cite{campi2018general} and sophisticated scenario management techniques, such as conditional scenario generation \cite{gonzalez2023conditional}, have been developed to improve its efficiency. Nevertheless, a key limitation of the standard scenario approach is its high computational cost for achieving low violation probabilities and its open-loop nature within the prediction, which can struggle to guarantee recursive feasibility without modifications \cite{schildbach2014scenario}.

Recently, the integration of machine learning has opened new avenues for data-driven MPC, particularly for adapting to uncertainty online. Gaussian Processes (GPs), with their inherent ability to provide uncertainty estimates alongside predictions, have proven to be a valuable tool \cite{rasmussen2006gaussian, li2024data}. A significant advancement in this direction is the work by Capone et al. \cite{capone2025online}, which uses a GP regression framework to directly learn the relationship between constraint-tightening parameters and the resulting satisfaction probability, enabling online adaptation. Beyond GPs, other learning techniques like neural networks and quasi-interpolation have been used to synthesize explicit MPC feedback laws offline, offering a-priori guarantees on the approximation error and enabling microsecond-level online evaluation \cite{ganguly2025explicit, parisini1995receding}.

Despite these considerable advances across different paradigms, a critical challenge remains: the development of control strategies that can robustly and efficiently adapt to non-stationary uncertainties. Real-world systems are rarely subject to static noise; instead, disturbance characteristics and system parameters often change over time. Existing methods, including many learning-based ones, typically assume stationarity and may therefore react slowly or inadequately to such changes, leading to transient periods of poor performance or constraint violations. There is a clear need for a framework that can intelligently assess risk in real-time and adapt its level of robustness in a systematic, provably safe manner.

This paper introduces a Risk-Aware Adaptive Robust MPC (RAAR-MPC) framework, a novel hierarchical architecture specifically designed to address this challenge. Our method systematically orchestrates a synthesis of proactive, learning-based risk assessment and reactive, experience-driven risk regulation. At its core, RAAR-MPC employs a dual-timescale adaptation mechanism. A medium-frequency loop utilizes a GP-based active learning engine to proactively identify critical uncertainty scenarios and construct a tight, data-driven Learned Prediction-Error Set (LPES). Concurrently, a low-frequency, outer loop implements a self-correcting update law for an adaptive safety margin, which precisely regulates the empirical risk based on closed-loop performance and compensates for unmodeled or non-stationary dynamics. This unique interplay enables the system to rigorously satisfy chance constraints with a user-defined probability, while minimizing the conservatism inherent in traditional approaches, even in the face of significant, time-varying uncertainties. We formally establish that this dual-adaptive architecture guarantees recursive feasibility and ensures the closed-loop system satisfies the specified chance constraints with high probability.

The main contributions of this work are threefold:
\begin{enumerate}
    \item \textit{A Novel Dual-Adaptive Robust MPC Architecture:} We propose a new framework that systematically integrates proactive, learning-based risk assessment with reactive, experience-driven adaptation. This architecture decouples the computationally intensive task of uncertainty quantification from the real-time control loop, enabling intelligent adaptation without compromising the speed of the MPC solve time.
    \item \textit{An Intelligent Risk-Informed Uncertainty Characterization:} We introduce a GP-based risk engine that leverages an Upper Confidence Bound (UCB) criterion to efficiently discover critical uncertainty scenarios. This leads to the construction of a Learned Prediction-Error Set (LPES) that provides a tight, non-parametric, and data-driven characterization of the propagated uncertainty, thereby reducing the conservatism inherent in traditional worst-case methods.
    \item \textit{A Rigorous Theoretical Framework with Formal Guarantees:} We provide formal proofs for the key properties of the proposed RAAR-MPC scheme. We establish recursive feasibility by construction and demonstrate that the controller achieves probabilistic constraint satisfaction with a quantifiable guarantee inherited from the LPES. Furthermore, we demonstrate the closed-loop stability of the entire system, including the physical state and the adaptive margin, by analyzing an augmented Lyapunov function, thus ensuring that the system remains bounded in expectation.
\end{enumerate}

The remainder of this paper is organized as follows. Section~\ref{sec:problem_formulation} presents the system modeling and formal problem formulation. Section~\ref{sec:raar_mpc_framework} details the proposed RAAR-MPC methodology, elaborating on the two core modules: the Online Risk Assessment Engine and the Dual-Layer Adaptive Robust Control Law. In Section~\ref{sec:theoretical_analysis}, we provide a formal analysis of the proposed framework, establishing its key theoretical properties. Section~\ref{sec:numerical_example} presents a numerical example on a benchmark DC-DC converter to demonstrate the efficacy and performance of our approach in comparison with state-of-the-art methods. Finally, Section~\ref{sec:conclusion} concludes the paper and outlines directions for future research.

\section{Problem Formulation}
\label{sec:problem_formulation}

We consider a discrete-time linear system subject to both time-varying parametric uncertainties and additive disturbances, described by the following state-space model:
\begin{equation}
    x_{k+1} = A_k x_k + B_k u_k + G d_k,
    \label{eq:true_system}
\end{equation}
where $x_k \in \mathbb{R}^{n_x}$ is the system state, $u_k \in \mathbb{R}^{n_u}$ is the control input, and $d_k \in \mathbb{R}^{n_d}$ represents an unknown, time-varying additive disturbance. The system matrices $(A_k, B_k)$ are themselves uncertain, potentially non-stationary, and are assumed to belong to a known compact set $\Delta$, i.e., $(A_k, B_k) \in \Delta$. The true realization of these matrices and the disturbance sequence $\{d_k\}$ are not known \textit{a priori}. We assume that the state and input must satisfy the following polytopic constraints for all $k \ge 0$:
\begin{equation}
    x_k \in \mathcal{X} = \{x \mid C_x x \le c_x\}, \quad u_k \in \mathcal{U} = \{u \mid C_u u \le c_u\},
    \label{eq:constraints}
\end{equation}
where $\mathcal{X}$ and $\mathcal{U}$ are compact sets that contain the origin in their respective interiors.

The control objective is to ensure that these constraints are satisfied probabilistically in the long run, despite the significant, time-varying nature of the uncertainty. This requirement is formalized as a chance constraint on the empirical frequency of constraint violations. For a given constraint function $h(x_k, u_k) \le 0$ representing one of the polytopic constraints in \eqref{eq:constraints}, we require:
\begin{equation}
    \lim_{T \to \infty} \frac{1}{T} \sum_{k=0}^{T-1} \mathbb{P}[h(x_k, u_k) > 0] \le \delta,
    \label{eq:chance_constraint}
\end{equation}
where $\mathbb{P}[\cdot]$ denotes the probability over the random realizations of uncertainties, and $\delta \in (0, 1)$ is a user-defined risk tolerance that specifies the maximum acceptable violation rate. The objective is to design a control law that minimizes a given performance cost while rigorously satisfying this chance constraint. This paper proposes a novel framework to address this challenge by adaptively learning the characteristics of the uncertainty online.

\section{The Risk-Aware Adaptive Robust MPC (RAAR-MPC) Framework}
\label{sec:raar_mpc_framework}


\subsection{Tube-Based Robust MPC with Adaptive Tightening}
\label{sec:tube_mpc}

The core of our framework is a tube-based robust Model Predictive Control (MPC) strategy. This approach decomposes the true system state and control input into a nominal component, which is optimized by the MPC, and an error component, which accounts for the effects of all uncertainties. At any time step $k$, the true state $x_k$ and input $u_k$ are defined as:
\begin{equation}
    x_k = z_k + e_k, \quad u_k = v_k + K_e e_k,
    \label{eq:state_decomposition}
\end{equation}
where $(z_k, v_k)$ are the nominal state and input, $e_k$ is the state error, and $K_e$ is a pre-computed stabilizing feedback gain for the error dynamics.

This decomposition allows us to separate the system dynamics into two parts. The nominal dynamics, used for prediction and optimization within the MPC, are given by:
\begin{equation}
    z_{k+1} = A_{\text{nom}} z_k + B_{\text{nom}} v_k,
    \label{eq:nominal_dynamics}
\end{equation}
where $(A_{\text{nom}}, B_{\text{nom}})$ is a known nominal model, which may be a mean or a simplified representation of the true system matrices. By substituting the decomposition \eqref{eq:state_decomposition} into the true system dynamics \eqref{eq:true_system} and subtracting the nominal dynamics \eqref{eq:nominal_dynamics}, we obtain the closed-loop error dynamics:
\begin{align}
    e_{k+1} &= (A_{\text{nom}} + B_{\text{nom}}K_e) e_k + (A_k - A_{\text{nom}})z_k \nonumber \\
    &\quad + (B_k - B_{\text{nom}})v_k + (B_k - B_{\text{nom}})K_e e_k + G d_k.
    \label{eq:error_dynamics}
\end{align}
This equation reveals that the evolution of the error $e_k$ is driven by a complex combination of the additive disturbance $d_k$ and the parametric model mismatch, $(A_k - A_{\text{nom}}, B_k - B_{\text{nom}})$, coupled with the nominal trajectory $(z_k, v_k)$.

To ensure robust constraint satisfaction, the error $e_k$ must be confined to a robust positive invariant (RPI) set for all time. A traditional approach would use a fixed, worst-case RPI set, leading to significant conservatism. Our key innovation is to define a Total Uncertainty Set, $\mathcal{U}_{\text{total}}(t)$, that is adapted online. This set is a composite structure, formed by the Minkowski sum of two distinct components:
\begin{equation}
    \mathcal{U}_{\text{total}}(t) = \mathcal{S}(t) \oplus \mathcal{B}(t),
    \label{eq:total_uncertainty_set}
\end{equation}
where:
\begin{itemize}
    \item $\mathcal{S}(t)$ is the Learned Prediction-Error Set (LPES), an outer approximation of the structured, predictable component of the error. It is constructed from data by a medium-frequency learning loop, as will be detailed in Section~\ref{sec:medium_freq_loop}. We represent this set by its time-varying, axis-aligned bounding box, characterized by support vectors $s_k(t) \in \mathbb{R}^{n_x}_{\ge 0}$.
    \item $\mathcal{B}(t)$ is the Adaptive Safety Margin, which accounts for unstructured or unmodeled errors not captured by the LPES. It is represented by a ball of radius $\beta_t$, where $\beta_t \in \mathbb{R}_{\ge 0}$ is a scalar safety margin updated by a low-frequency risk regulation loop, detailed in Section~\ref{sec:low_freq_loop}.
\end{itemize}

\begin{figure}[!t]
    \centering
    \includegraphics[width=0.85\columnwidth]{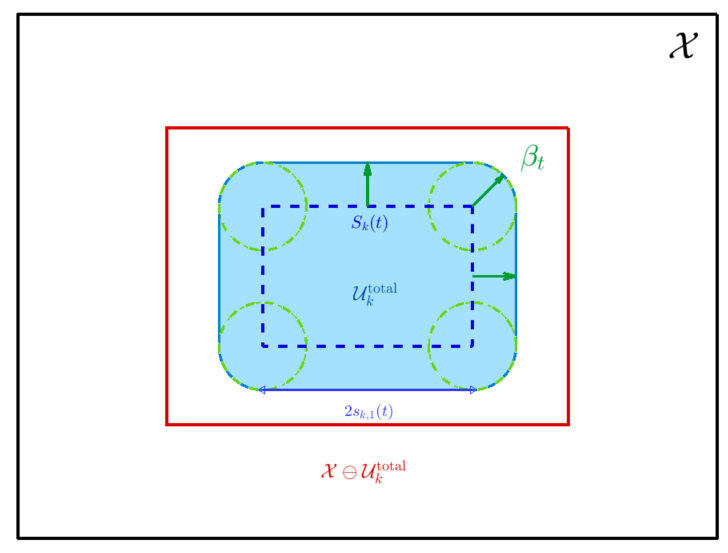}
    \caption{Visualization of the robust constraint tightening. The nominal state $z_k$ must remain within the tightened set $\mathcal{X} \ominus \mathcal{U}_k^\text{total}$, where $\ominus$ denotes the Pontryagin set difference. The total uncertainty set, $\mathcal{U}_k^\text{total}(t) = S_k(t) \oplus B(t)$, is the Minkowski sum of the axis-aligned Learned Prediction-Error Set $S_k(t)$ (a polytope, here a rectangle) and the Adaptive Safety Margin $B(t)$ (an $\ell_\infty$-ball of radius $\beta_t$). This construction robustly guarantees that the true state $x_k = z_k + e_k$ will satisfy the original state constraint $x_k \in \mathcal{X}$ for any error $e_k \in \mathcal{U}_k^\text{total}(t)$.}
    \label{fig:tightening}
\end{figure}

With these definitions, we can formulate the inner-loop robust MPC optimization problem. At each time step $t$, given the current state $x_t$, the LPES support vectors $\{s_{k|t}\}_{k=0}^{N-1}$, and the adaptive margin $\beta_t$, we solve the following convex Quadratic Program (QP) to find the optimal nominal control sequence $\mathbf{v}_t^* = \{v_{k|t}^*\}_{k=0}^{N-1}$:
\begin{subequations}
\label{eq:robust_mpc_qp}
\begin{align}
    \min_{\mathbf{z}_t, \mathbf{v}_t, \varepsilon_t} \quad & \sum_{k=0}^{N-1} \left( \|z_{k|t}\|_{Q}^2 + \|v_{k|t}\|_{R}^2 \right) + \|z_{N|t}\|_{P}^2 + \rho \varepsilon_t \\
    \text{s.t.} \quad & z_{0|t} = x_t, \\
    & z_{k+1|t} = A_{\text{nom}}z_{k|t} + B_{\text{nom}}v_{k|t}, \quad \forall k \in \{0, \dots, N-1\}, \\
    & z_{k|t} \in \mathcal{X} \ominus \mathcal{U}_{\text{total},k}(t), \quad \forall k \in \{0, \dots, N-1\}, \label{eq:state_tightening} \\
    & v_{k|t} \in \mathcal{U} \ominus K_e\mathcal{U}_{\text{total},k}(t), \quad \forall k \in \{0, \dots, N-1\}, \label{eq:input_tightening} \\
    & z_{N|t} \in \mathcal{X}_f, \\
    & \varepsilon_t \ge 0.
\end{align}
\end{subequations}
Here, $\mathbf{z}_t$ and $\mathbf{v}_t$ are the sequences of nominal states and inputs over the prediction horizon $N$. $Q$, $R$, and $P$ are positive semi-definite weighting matrices, $\mathcal{X}_f$ is a terminal set, and $\varepsilon_t$ is a slack variable with a large penalty $\rho \gg 0$ to ensure recursive feasibility. The symbol $\ominus$ denotes the Pontryagin set difference.

The core of the robust formulation lies in the constraint tightening in \eqref{eq:state_tightening} and \eqref{eq:input_tightening}. For the polytopic constraints defined in \eqref{eq:constraints}, the tightened state constraints are explicitly formulated for each row $i$ of the matrix $C_x$ as:
\begin{equation}
    C_{x,i} z_{k|t} \le c_{x,i} - \sup_{e \in \mathcal{U}_{\text{total},k}(t)} (C_{x,i} e) - \varepsilon_t.
    \label{eq:state_tightening_explicit}
\end{equation}
Given the axis-aligned and ball-shaped structure of our uncertainty sets, this supremum can be computed efficiently. Specifically, the tightening term becomes $\|C_{x,i}\|_1 (s^{\max}_{k|t} + \beta_t)$, where $s^{\max}_{k|t}$ is the maximum component of the LPES support vector $s_{k|t}$. A similar formulation applies to the input constraints.

The robust MPC controller \eqref{eq:robust_mpc_qp} guarantees constraint satisfaction for any error realization within the assumed total uncertainty set $\mathcal{U}_{\text{total}}(t)$. However, the performance and feasibility of this controller critically depend on the choice of the tightening parameters $s_k(t)$ and $\beta_t$. Overly conservative (large) values will shrink the feasible set, leading to poor performance or even infeasibility, while overly optimistic (small) values will result in frequent constraint violations. Therefore, the central challenge is to develop a systematic methodology for the online co-design of the LPES $s_k(t)$ and the safety margin $\beta_t$. The objective of this co-design is to satisfy the long-term chance constraint \eqref{eq:chance_constraint} with high precision while minimizing conservatism. The multi-timescale learning architecture designed to address this challenge is detailed in the subsequent sections.

\subsection{Multi-Timescale Adaptive Mechanism}
\label{sec:adaptive_mechanism}

The robust MPC controller described in the previous section relies on two adaptive quantities: the Learned Prediction-Error Set (LPES), $\mathcal{S}(t)$, and the adaptive safety margin, $\beta_t$. We now detail the hierarchical learning mechanism responsible for their online synthesis. This mechanism comprises two distinct loops operating on different timescales.

\subsubsection{Medium-Frequency Loop: Intelligent Risk Assessment and LPES Construction}
\label{sec:medium_freq_loop}

The Learned Prediction-Error Set (LPES), $\mathcal{S}(t)$, provides a tight, data-driven characterization of the structured component of the prediction error. It is updated periodically, for instance, every $M$ control steps, through a sophisticated risk assessment process. This process moves beyond static, worst-case assumptions by proactively identifying and quantifying the most critical uncertainty scenarios from operational data. The procedure consists of defining a criticality metric, learning a surrogate model for this metric, actively discovering high-criticality scenarios, and finally constructing the LPES from high-fidelity simulations. These components are detailed sequentially below.

\paragraph{A Lyapunov-Based Criticality Metric.}
The foundation of our risk assessment is a mathematically rigorous metric that quantifies the criticality of any potential uncertainty realization. Criticality is defined not by the magnitude of an uncertainty, but by its potential to destabilize the closed-loop error dynamics, thereby posing the greatest risk to constraint satisfaction.

Let an uncertainty realization over a prediction horizon of $N$ steps be denoted by the tuple $\zeta = (\mathbf{d}, \mathbf{\Delta})$, where $\mathbf{d} = \{d_0, \dots, d_{N-1}\}$ is the disturbance sequence and $\mathbf{\Delta} = \{(A_0, B_0), \dots, (A_{N-1}, B_{N-1})\}$ is the sequence of time-varying system matrices. For a given stabilizing feedback gain $K_e$, there exists a Lyapunov matrix $P \succ 0$ and a scalar $\alpha_L \in (0, 1)$ satisfying the Lyapunov inequality $(A_{\text{nom}} + B_{\text{nom}}K_e)^T P (A_{\text{nom}} + B_{\text{nom}}K_e) - P \le -\alpha_L P$. This inequality ensures that for the nominal, undisturbed error dynamics, the Lyapunov function $V(e) = e^T P e$ decays at a geometric rate.

The criticality of the realization $\zeta$ is then quantified by its ability to counteract this stabilizing decay. We define the one-step Lyapunov Violation Index as the amount by which the Lyapunov function increases, or fails to decrease as expected, at each step $k$:
\begin{equation}
    \mathcal{L}_k(\zeta) = V(e_{k+1}) - (1 - \alpha_L)V(e_k),
    \label{eq:lyapunov_violation_index}
\end{equation}
where the error trajectory $\{e_k\}$ (with $e_0 = 0$) is propagated forward using the full error dynamics \eqref{eq:error_dynamics} under the specific uncertainty realization $\zeta$ and the nominal plan $(z_{k|t}, v_{k|t})$ from the previous MPC solution. A positive value of $\mathcal{L}_k(\zeta)$ indicates a momentary growth in the error energy that exceeds the system's inherent stabilizing capability. The overall criticality of the entire realization, $\gamma(\zeta)$, is then conservatively defined as the maximum violation observed over the prediction horizon:
\begin{equation}
    \gamma(\zeta) = \max_{k \in \{0, \dots, N-1\}} \mathcal{L}_k(\zeta).
    \label{eq:criticality_metric}
\end{equation}
This metric provides a comprehensive measure of destabilizing potential, as it captures not only the immediate impact of an uncertainty but also its propagated effects through the system dynamics. Scenarios with a high $\gamma(\zeta)$ value are those most likely to drive the system state towards its constraint boundaries.

\paragraph{Gaussian Process Surrogate Modeling for Criticality Estimation.}
Directly evaluating the criticality function $\gamma(\zeta)$ requires an N-step simulation for each candidate uncertainty, rendering an exhaustive search computationally intractable. To overcome this limitation, we construct a computationally efficient surrogate model of the criticality function using Gaussian Process (GP) regression.

A prerequisite for effective GP modeling is the transformation of the variable-length, high-dimensional uncertainty realization $\zeta$ into a fixed-dimensional feature vector. We define a feature extraction operator $\Phi: \mathcal{U} \to \mathbb{R}^{d_f}$ that maps $\zeta$ to a feature vector $F = \Phi(\zeta)$. This process is crucial for capturing the essential characteristics that influence criticality. The feature vector $F$ is a concatenation of several components, including temporal features (e.g., a flattened window of the initial disturbance sequence), statistical features (e.g., moments of the disturbance), and spectral features (e.g., dominant frequencies from a Fourier Transform).

The GP then learns the mapping $\hat{\gamma}: \mathbb{R}^{d_f} \to \mathbb{R}$ from the feature space to the criticality value. It places a zero-mean Gaussian prior over the function space, with a covariance defined by a kernel function. We employ a squared exponential kernel with Automatic Relevance Determination (ARD):
\begin{equation}
    k(F_i, F_j) = \sigma_f^2 \exp\left(-\frac{1}{2} \sum_{d=1}^{d_f} \left(\frac{F_{i,d} - F_{j,d}}{l_d}\right)^2\right),
    \label{eq:gp_kernel}
\end{equation}
where the signal variance $\sigma_f^2$ and the length-scales $\{l_d\}$ are hyperparameters. Given a training dataset $\mathcal{D} = \{(F^{(i)}, \gamma^{(i)})\}_{i=1}^{n_{\text{train}}}$, where each $\gamma^{(i)}$ is the true criticality computed via \eqref{eq:criticality_metric}, the GP provides a full posterior predictive distribution for any new feature vector $F_*$. This distribution is Gaussian, $\mathcal{N}(\mu_{\text{GP}}(F_*), \sigma^2_{\text{GP}}(F_*))$, with the predictive mean and variance given by:
\begin{align}
    \mu_{\text{GP}}(F_*) &= \mathbf{k}_*^T (\mathbf{K} + \sigma_n^2 I)^{-1} \mathbf{y}, \label{eq:gp_mean_final} \\
    \sigma^2_{\text{GP}}(F_*) &= k(F_*, F_*) - \mathbf{k}_*^T (\mathbf{K} + \sigma_n^2 I)^{-1} \mathbf{k}_*, \label{eq:gp_variance_final}
\end{align}
where $\mathbf{K}$ is the $n_{\text{train}} \times n_{\text{train}}$ kernel matrix with entries $[\mathbf{K}]_{ij} = k(F^{(i)}, F^{(j)})$, $\mathbf{k}_*$ is the $n_{\text{train}} \times 1$ vector of kernel evaluations between $F_*$ and the training inputs, $\mathbf{y}$ is the $n_{\text{train}} \times 1$ vector of training targets $\{\gamma^{(i)}\}$, and $\sigma_n^2$ is the noise variance hyperparameter. The set of hyperparameters $\theta = \{\sigma_f^2, \{l_d\}_{d=1}^{d_f}, \sigma_n^2\}$ is optimized by maximizing the log marginal likelihood on the training data $\mathcal{D}$. This procedure automatically trades off model fit and complexity. The predictive variance $\sigma^2_{\text{GP}}(F_*)$ is particularly valuable, as it provides a principled measure of the model's confidence in its own prediction, which is the key enabler for the subsequent active learning step.

\paragraph{Active Discovery of Critical Scenarios.}
Armed with the fast-to-evaluate GP surrogate and its uncertainty estimates, we can efficiently search the space of possible uncertainties to find the most critical scenarios. We employ an active learning strategy based on the Upper Confidence Bound (UCB) acquisition function to intelligently balance exploitation (investigating regions the GP predicts to be highly critical) and exploration (investigating regions where the GP is uncertain).

The process begins by generating a large pool of $N_{\text{cand}}$ candidate uncertainty realizations. For each candidate $\zeta^{(i)}$, we compute its feature vector $F^{(i)} = \Phi(\zeta^{(i)})$ and then evaluate the GP surrogate to obtain the predictive mean $\mu_{\text{GP}}(F^{(i)})$ and variance $\sigma^2_{\text{GP}}(F^{(i)})$. We then compute its Pessimistic Criticality Estimate (PCE):
\begin{equation}
    \text{PCE}(\zeta^{(i)}) = \mu_{\text{GP}}(F^{(i)}) + \kappa_{\text{ucb}} \sigma_{\text{GP}}(F^{(i)}),
    \label{eq:pce_ucb}
\end{equation}
where $\kappa_{\text{ucb}} > 0$ is a tunable parameter. We rank all candidates according to their PCE values and select the top $K_{\text{crit}}$ scenarios to form the critical scenario set, $\mathcal{C}_{\text{crit}}$. This targeted discovery process is significantly more sample-efficient at finding high-impact, low-probability events than unstructured sampling methods.

\paragraph{High-Fidelity Simulation and LPES Construction.}
The final component of the loop translates the identified set of abstract critical scenarios, $\mathcal{C}_{\text{crit}}$, into a concrete, computationally tractable set representation for the MPC. For each critical scenario $\zeta^{(j)} \in \mathcal{C}_{\text{crit}}$, we perform a full N-step simulation of the error dynamics \eqref{eq:error_dynamics} to obtain the precise error trajectory $\{e^{(j)}_{k|t}\}_{k=0}^N$.

These simulated trajectories are aggregated to construct the time-varying Learned Prediction-Error Sets. For each prediction step $k \in \{0, \dots, N-1\}$, the LPES, $\mathcal{S}_{k|t}$, is formally defined as the convex hull of the simulated error state endpoints at that step:
\begin{equation}
    \mathcal{S}_{k|t} = \text{Conv}\left(\{e^{(j)}_{k|t} \mid j=1, \dots, K_{\text{crit}}\}\right).
    \label{eq:lpes_conv}
\end{equation}
To maintain computational tractability within the online MPC, we employ an axis-aligned outer approximation of this set, characterized by support vectors $s_{k|t} \in \mathbb{R}^{n_x}_{\ge 0}$. The $i$-th component of the support vector is computed as the maximum absolute value observed along that dimension among all critical error trajectories:
\begin{equation}
    [s_{k|t}]_i = \max_{j \in \{1, \dots, K_{\text{crit}}\}} |[e^{(j)}_{k|t}]_i|, \quad \text{for } i=1, \dots, n_x.
    \label{eq:lpes_support_vector}
\end{equation}
This set of support vectors, $\{s_{k|t}\}_{k=0}^{N-1}$, is the final output of the medium-frequency loop and is subsequently used to define the tightening in the robust MPC problem \eqref{eq:robust_mpc_qp}.

\begin{figure}
    \centering
    \includegraphics[width=1\linewidth]{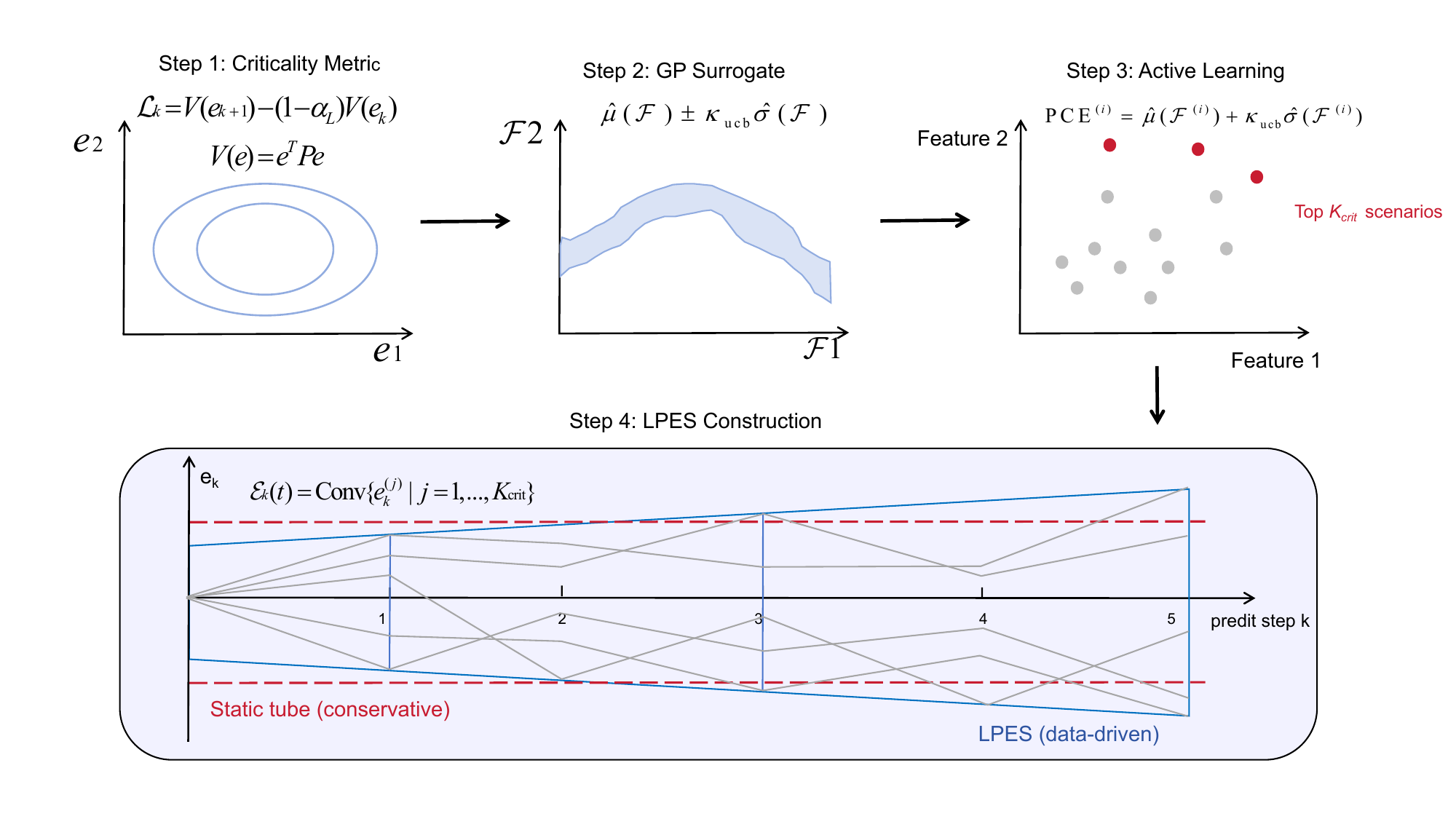}
    \caption{The medium-frequency loop for Learned Prediction-Error Set (LPES) construction.}
    \label{fig:lpes_construction}
\end{figure}

\subsubsection{Low-Frequency Loop: Self-Correcting Risk Regulation}
\label{sec:low_freq_loop}

While the medium-frequency loop proactively characterizes structured uncertainties via the LPES, a mechanism is still required to compensate for any residual model mismatch, unstructured disturbances, or inadequacies of the learned error set. Furthermore, a method is needed to ensure that the long-term empirical rate of constraint violation converges precisely to the user-specified risk level $\delta$. The low-frequency risk regulation loop, which updates the adaptive safety margin $\beta_t$, is designed for this purpose. It functions as a reactive, self-correcting outer loop that provides the ultimate guarantee on chance constraint satisfaction.

\paragraph{The Challenge of Learning from Rare Events.}
A straightforward approach to tune $\beta_t$ would be to employ a standard stochastic approximation (SA) scheme. Such a scheme would increase $\beta_t$ upon observing a physical constraint violation and decrease it otherwise, aiming to drive the violation probability to $\delta$. This can be formulated as:
\begin{equation}
    \beta_{t+1} = \Pi_{\mathcal{B}} \left[ \beta_t - \alpha_t \left( \mathbb{I}(h(x_t) > 0) - \delta \right) \right],
    \label{eq:naive_sa}
\end{equation}
where $h(x_t) > 0$ denotes a constraint violation, $\mathbb{I}(\cdot)$ is the indicator function, $\alpha_t$ is a learning rate, and $\Pi_{\mathcal{B}}$ is a projection onto a valid range for $\beta_t$, e.g., $[0, \beta_{\max}]$.

However, this naive approach is fundamentally flawed in the context of a high-performance robust control system. By design, the combined action of the robust MPC and the LPES makes physical constraint violations rare, low-probability events. Consequently, the learning signal $\mathbb{I}(h(x_t) > 0)$ is almost always zero. The SA algorithm is thus starved of corrective feedback, causing $\beta_t$ to perpetually decrease until the system's robustness margin is eroded, eventually leading to a cascade of violations without a reliable mechanism for recovery. This signal sparsity problem necessitates a more sophisticated learning architecture.

\paragraph{A Dynamic Target Compensation Framework.}
To overcome the challenge of learning from sparse signals, we propose a framework that decouples the learning trigger from the rare physical violation event. This is achieved by defining a more frequent "learning event" and then correcting for the statistical bias introduced by this redefinition.

First, we define a dynamic, internal safety boundary that is coupled with the current robustness level $\beta_t$. The learning boundary, $m_s(t)$, is defined as:
\begin{equation}
    m_s(t) = c_m \beta_t,
    \label{eq:learning_boundary}
\end{equation}
where $c_m > 0$ is a constant gain. This coupling creates a stabilizing negative feedback loop: as $\beta_t$ increases (making the system more robust), the boundary $m_s(t)$ also increases, making the learning trigger less sensitive. We then define the "Learning Event," $L_t$, as the event where the constraint function $h(x_t)$ exceeds the negative of this boundary:
\begin{equation}
    L_t \iff h(x_t) > -m_s(t).
    \label{eq:learning_event}
\end{equation}

\begin{figure}
    \centering
    \includegraphics[width=1\linewidth]{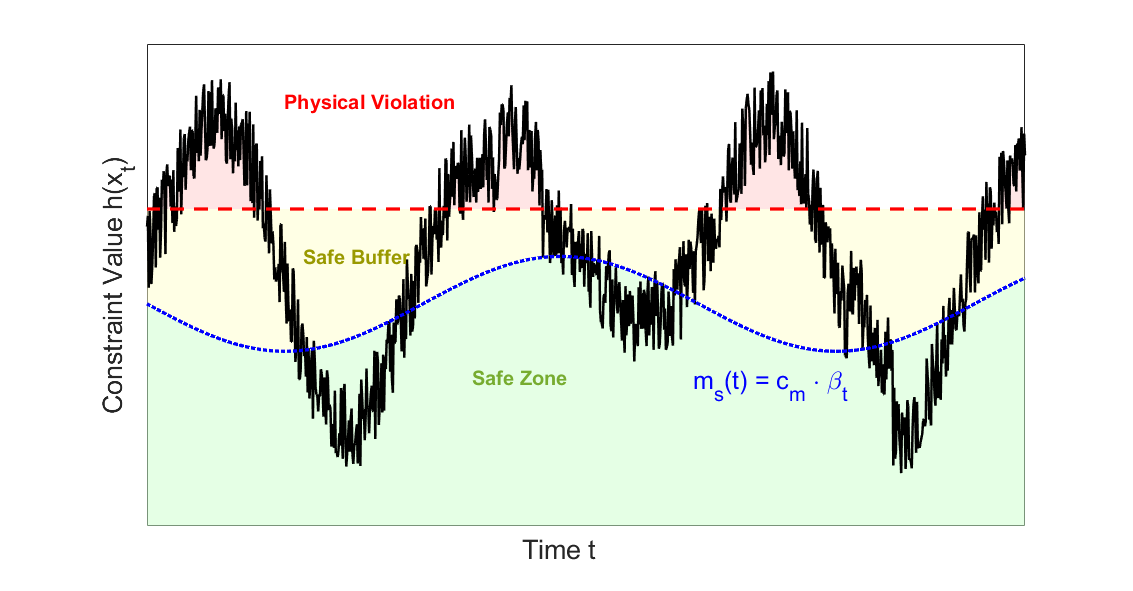}
    \caption{Conceptual illustration of the event regions used in the low-frequency risk regulation loop. To overcome the signal sparsity of physical constraint violations (red region), we define a dynamic learning boundary $m_s(t) = c_m \cdot \beta_t$ (blue dotted line). The "Learning Event" $\mathcal{L}_t$ is defined as the union of the "Physical Violation" region ($h(x_t) > 0$) and the "Safe Buffer" region ($-m_s(t) < h(x_t) \leq 0$). This creates a frequent and informative signal for the stochastic approximation scheme, while the statistical bias introduced by the "Safe Buffer" is actively compensated for.}
    \label{fig:enter-label}
\end{figure}

Since $-m_s(t)$ is typically a value within the "safe" region (i.e., less than zero), the learning event $L_t$ occurs far more frequently than the physical violation event $h(x_t) > 0$, providing a rich and persistent signal for the learning algorithm.

However, by shifting the learning trigger, we have also shifted the probabilistic target of the SA algorithm. The probability of a learning event, $\mathbb{P}(L_t)$, can be decomposed into the sum of two disjoint probabilities: the probability of a physical violation and the probability of the state residing in the "safe buffer" zone. This relationship is formalized in the following theorem.

\begin{theorem}[Probabilistic Decomposition and Inherent Bias]
\label{thm:prob_decomp}
The probability of a learning event, $\mathbb{P}(L_t)$, where $L_t$ is defined in \eqref{eq:learning_event}, can be decomposed as:
\begin{equation}
    \mathbb{P}(L_t) = \mathbb{P}(h(x_t) > 0) + \mathbb{P}(-m_s(t) < h(x_t) \le 0).
    \label{eq:prob_decomposition}
\end{equation}
\end{theorem}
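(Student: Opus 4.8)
The plan is to prove this as a direct consequence of partitioning the half-line that defines the learning event, combined with finite additivity of the probability measure. The key observation is that, since $c_m > 0$ and $\beta_t \ge 0$, the learning boundary satisfies $m_s(t) = c_m \beta_t \ge 0$; in the operating regime of interest $\beta_t > 0$, so $-m_s(t) < 0$ and the threshold value $0$ lies strictly inside the interval $(-m_s(t), \infty)$. The whole argument is essentially set-theoretic, so no distributional assumptions on $x_t$ (such as existence of a density or continuity) are required.

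First I would write the defining set of the learning event as $\{L_t\} = \{\omega : h(x_t(\omega)) > -m_s(t)\}$ and decompose the target half-line $(-m_s(t), \infty)$ into the disjoint union of intervals $(-m_s(t), 0] \,\cup\, (0, \infty)$. Taking preimages under the measurable map $\omega \mapsto h(x_t(\omega))$ — measurability being immediate since $h$ is affine in $(x,u)$ and the closed-loop state $x_t$ is a random vector — yields
\begin{equation}
    \{L_t\} = \{-m_s(t) < h(x_t) \le 0\} \,\sqcup\, \{h(x_t) > 0\},
    \label{eq:set_partition}
\end{equation}
where the two events on the right are mutually exclusive by construction.

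Second, I would apply finite additivity of $\mathbb{P}$ to \eqref{eq:set_partition}, which immediately gives the claimed identity \eqref{eq:prob_decomposition}. I would accompany this with the interpretation that drives the rest of the section: the term $\mathbb{P}(-m_s(t) < h(x_t) \le 0)$ is exactly the extra probability mass attributable to the "safe buffer" that the redefined trigger $L_t$ absorbs beyond the genuine violation probability, i.e.\ the inherent statistical bias that the dynamic target compensation must cancel.

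The only point needing care — and the closest thing to an obstacle here — is the degenerate case $m_s(t) = 0$ (i.e.\ $\beta_t = 0$), in which the buffer event $\{-m_s(t) < h(x_t) \le 0\} = \{0 < h(x_t) \le 0\}$ is empty and hence has probability zero, so \eqref{eq:prob_decomposition} collapses harmlessly to $\mathbb{P}(L_t) = \mathbb{P}(h(x_t) > 0)$; this is consistent with the stated identity and needs no separate treatment. Beyond this boundary check, the result is an elementary identity and no further machinery is required.
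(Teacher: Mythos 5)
Your proof is correct and follows essentially the same route as the paper's: partition the event $\{h(x_t) > -m_s(t)\}$ into the disjoint violation and safe-buffer events and invoke finite additivity of $\mathbb{P}$. The added remarks on measurability and the degenerate case $m_s(t)=0$ are harmless extra care but do not change the argument.
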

\begin{proof}
The event $L_t = \{x_t \mid h(x_t) > -m_s(t)\}$ is the union of two disjoint events: the physical violation event $V_t = \{x_t \mid h(x_t) > 0\}$ and the safe buffer event $B_t = \{x_t \mid -m_s(t) < h(x_t) \le 0\}$. Since $V_t \cap B_t = \emptyset$, the result follows directly from the additivity axiom of probability.
\end{proof}

Theorem~\ref{thm:prob_decomp} reveals that if an SA algorithm drives $\mathbb{P}(L_t) \to \delta$, the physical violation probability will converge to $\mathbb{P}(h(x_t)>0) \to \delta - \mathbb{P}(B_t)$, resulting in a systematic and undesirable conservatism. To counteract this bias, we introduce a Dynamic Target Compensation mechanism. Instead of tracking the static target $\delta$, the algorithm is designed to track a time-varying learning target, $\delta_L(t)$, which actively accounts for the probability of the state residing in the safe buffer zone:
\begin{equation}
    \delta_L(t) = \delta + \hat{\mathbb{P}}_t(B_t),
    \label{eq:dynamic_target}
\end{equation}
where $\hat{\mathbb{P}}_t(B_t)$ is an online estimate of the buffer probability $\mathbb{P}(-m_s(t) < h(x_t) \le 0)$. This estimate is computed empirically from a sliding window of $W$ recent constraint function values, $\{h(x_i)\}_{i=t-W+1}^t$:
\begin{equation}
    \hat{\mathbb{P}}_t(B_t) \approx \frac{1}{W} \sum_{i=t-W+1}^{t} \mathbb{I}(-m_s(i) < h(x_i) \le 0).
    \label{eq:buffer_prob_estimate}
\end{equation}

\paragraph{Self-Correcting Stochastic Approximation Scheme}
With the components for bias compensation in place, we can now formulate the final, refined stochastic approximation scheme for updating the adaptive margin $\beta_t$. The stochastic error term for the update, $e_{\text{SA}}(t)$, is computed with respect to the compensated learning target $\delta_L(t)$:
\begin{equation}
    e_{\text{SA}}(t) = \mathbb{I}(h(x_t) > -m_s(t)) - \delta_L(t).
\end{equation}
The final update law for the adaptive robustness margin is then given by:
\begin{equation}
    \beta_{t+1} = \Pi_{\mathcal{B}} \left[ \beta_t - \alpha_t e_{\text{SA}}(t) - \gamma_t(\beta_t - \bar{\beta}) \right],
    \label{eq:final_beta_update}
\end{equation}
where $\Pi_{\mathcal{B}}$ is the projection onto the valid interval $[0, \beta_{\max}]$, $\alpha_t$ is the primary learning rate, and the final term, $-\gamma_t(\beta_t - \bar{\beta})$, is a mean-reversion component with a small rate $\gamma_t \ll \alpha_t$. This term provides additional stability to the learning process by gently pulling $\beta_t$ towards a pre-defined baseline margin $\bar{\beta}$, preventing unconstrained drift.

This complete outer-loop mechanism establishes a robust, dual-feedback system. A fast inner loop, driven by the learning rate $\alpha_t$, tracks the dynamic target $\delta_L(t)$, while a slower outer loop, implemented via the estimation in \eqref{eq:buffer_prob_estimate}, corrects the target itself. This architecture ensures that the closed-loop system robustly and accurately converges to the desired physical risk level $\delta$, achieving a near-optimal balance between performance and safety.

\begin{figure}
    \centering
    \includegraphics[width=1\linewidth]{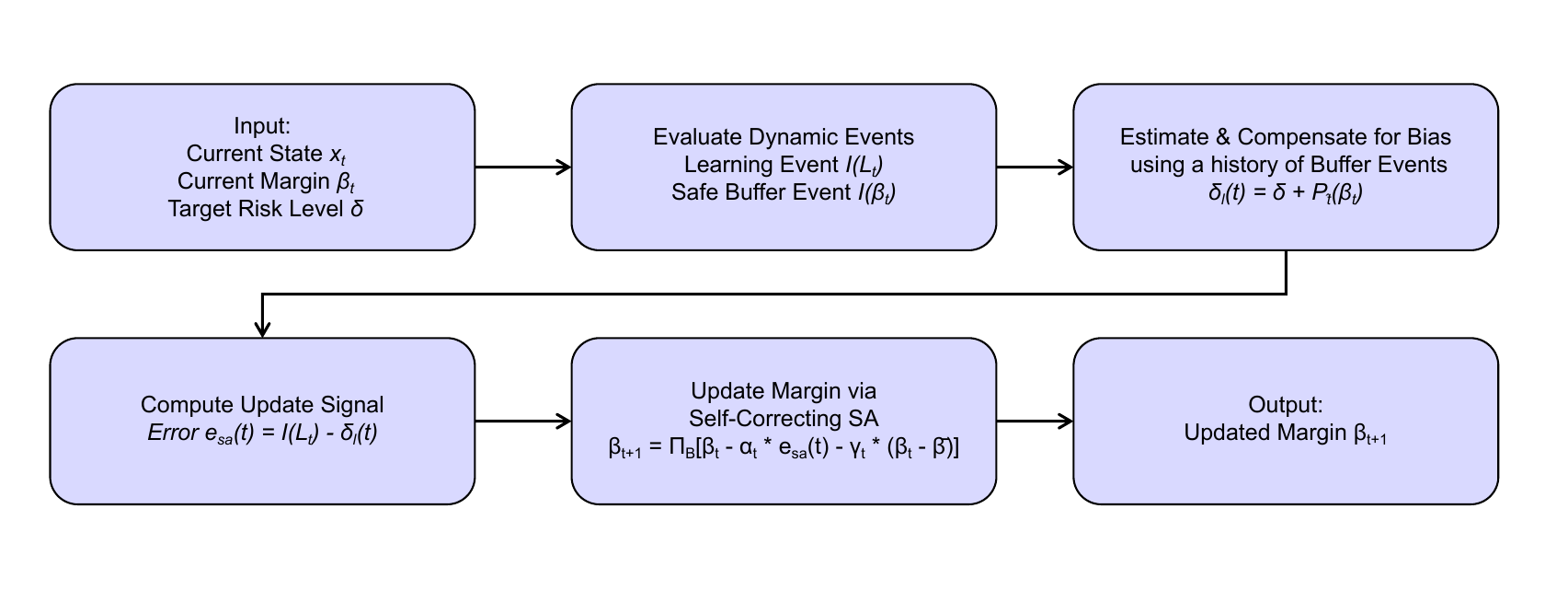}
    \caption{Flowchart of the low-frequency, self-correcting risk regulation loop for updating the adaptive safety margin $\beta_t$.}
    \label{fig:enter-label}
\end{figure}

\subsection{Algorithm Summary}
\label{sec:algorithm_summary}

The components described in the preceding sections—the robust inner-loop controller, the medium-frequency risk assessment engine, and the low-frequency risk regulation loop—are integrated into a single, cohesive algorithm. The complete operational flow of the RAAR-MPC framework is summarized in Algorithm~\ref{alg:raar-mpc}.

\begin{algorithm}[H]
\caption{The RAAR-MPC Framework (Concise)}
\label{alg:raar-mpc}
\begin{algorithmic}[1] 
\State \textbf{Initialize:} System  controller parameters $(A_{\text{nom}}, B_{\text{nom}}, Q, R, P, \mathcal{X}, \mathcal{U}, K_e, N)$.
\State \textbf{Initialize:} Learning parameters $(\delta, \beta_0, \bar{\beta}, \{\alpha_t, \gamma_t\}, W, M, K_{\text{crit}})$.
\State \textbf{Initialize:} GP model $\mathcal{GP}$, LPES $\mathcal{S}(0)$, history buffer $\mathcal{H} \leftarrow \emptyset$.

\For{$t = 0, 1, 2, \dots$}
    \State Measure current state $x_t$.
    \State Solve robust QP (8) to find nominal plan $(z_t^*, v_t^*)$.
    \State Apply control $u_t \leftarrow v_{0|t}^* + K_e(x_t - z_{0|t}^*)$.
    
    \Statex \Comment{Low-frequency risk regulation loop}
    \State Update adaptive safety margin $\beta_{t+1}$ using self-correcting stochastic approximation (Eq. 23-25).
    
    \Statex \Comment{Medium-frequency risk assessment loop}
    \If{$t \pmod M == 0$}
        \State Update LPES $\mathcal{S}(t+1)$ via risk assessment engine (Eq. 15-17).
    \Else
        \State $\mathcal{S}(t+1) \leftarrow \mathcal{S}(t)$.
    \EndIf
\EndFor
\end{algorithmic}
\end{algorithm}

The algorithm proceeds at each time step $t$ by first solving the robust MPC problem \eqref{eq:robust_mpc_qp} using the current LPES support vectors $s_{k|t}$ and adaptive margin $\beta_t$. The first element of the resulting optimal control sequence is then applied to the system. Concurrently, the low-frequency loop updates the adaptive margin $\beta_t$ based on the observed constraint behavior. Periodically, every $M$ steps, the medium-frequency loop is triggered to update the GP surrogate model with new data and subsequently re-compute a new set of LPES support vectors $\{s_{k|t+1}\}$ for the next operational phase. This multi-timescale orchestration ensures that the computationally intensive learning tasks do not interfere with the real-time control execution.

\begin{figure}
    \centering
    \includegraphics[width=1\linewidth]{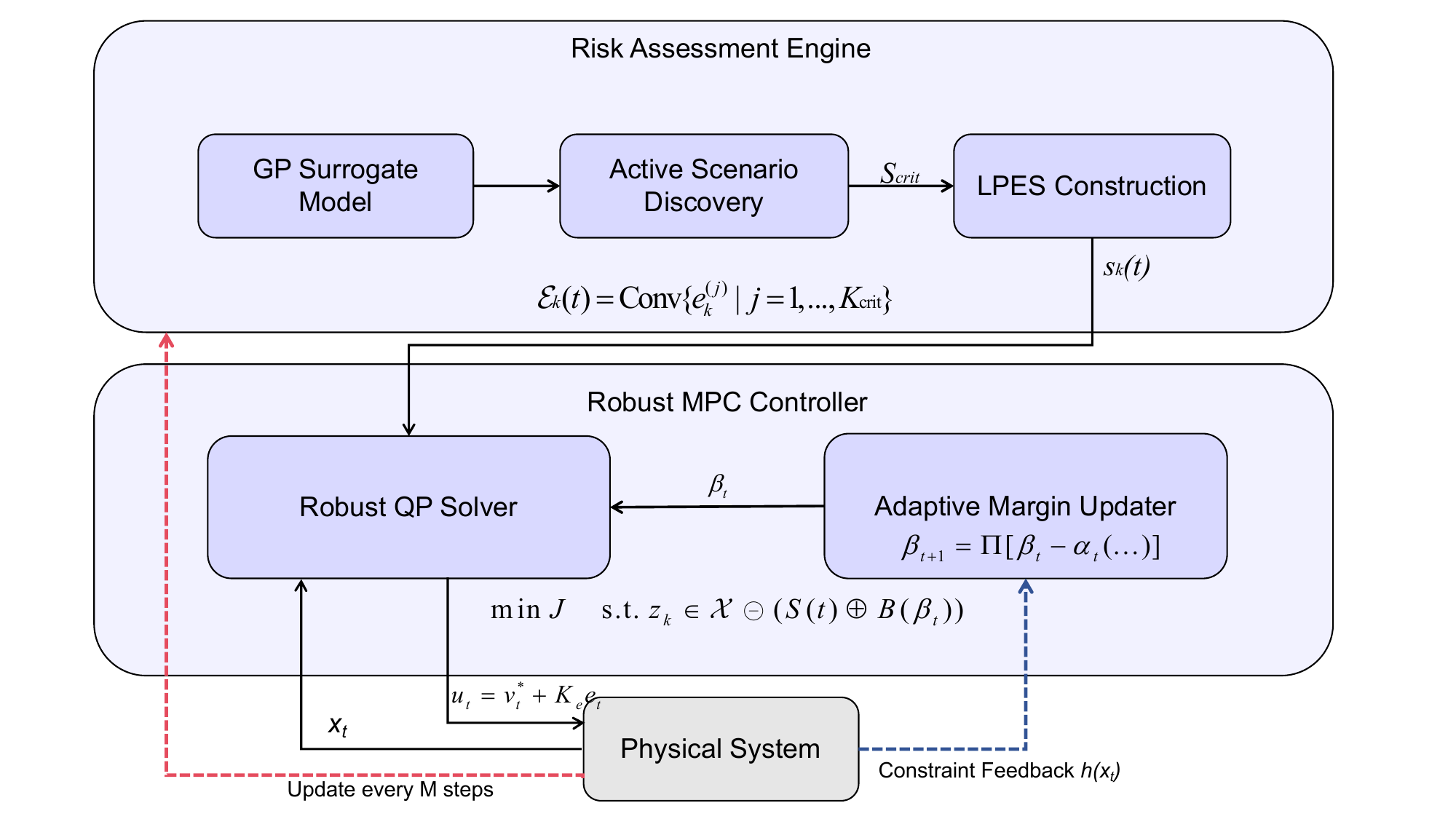}
    \caption{The proposed RAAR-MPC framework architecture.}
    \label{fig:enter-label}
\end{figure}

\section{THEORETICAL ANALYSIS}
\label{sec:theoretical_analysis}
In this section, we provide a formal analysis of the proposed Risk-Aware Adaptive Robust MPC (RAAR-MPC) framework. Our objective is to rigorously establish its key theoretical properties, namely recursive feasibility, closed-loop stability, and the satisfaction of probabilistic constraints. Our theoretical development relies on the following set of standard and problem-specific assumptions.

\subsection{Assumptions}
\begin{assumption}
 The state constraint set $\mathcal{X} \subset \mathbb{R}^{n_x}$ and the input constraint set $\mathcal{U} \subset \mathbb{R}^{n_u}$ are compact and contain the origin in their respective interiors. The set of system uncertainties, $\Delta$, and the disturbance support, $\mathcal{D}$, are compact.
\end{assumption}
\begin{assumption}
 There exists a state feedback gain matrix $K_e \in \mathbb{R}^{n_u \times n_x}$ for the error dynamics and a common Lyapunov matrix $P_e > 0$ such that for some scalar $\alpha_e \in (0,1)$, the following discrete-time Lyapunov inequality holds for all possible uncertainty realizations $(A_k, B_k) \in \Delta$:
\begin{equation}
(A_k + B_k K_e)^T P_e (A_k + B_k K_e) - P_e \leq -\alpha_e P_e.
\end{equation}  
\end{assumption}
\begin{assumption}
    The terminal set $\mathcal{X}_f \subseteq \mathcal{X}$, the terminal cost weighting matrix $P_f > 0$, and a terminal feedback gain $K_f$ are chosen to satisfy standard terminal conditions for robust MPC.
\end{assumption}
\begin{assumption}
 The Gaussian Process (GP) surrogate model for the criticality function $\gamma(\zeta)$ satisfies the following properties:
\begin{enumerate}
    \item[i)] \textit{RKHS Assumption:} The true, unknown criticality function $\gamma(\cdot)$ belongs to the Reproducing Kernel Hilbert Space (RKHS) $\mathcal{H}_k$ generated by the kernel $k(\cdot, \cdot)$. The kernel is continuous and bounded on the compact uncertainty space $\mathcal{U}$.
    \item[ii)] \textit{High-Probability Confidence Bounds:} For a training dataset $\mathcal{D}_t$ of size $t$, let $\mu_t(\zeta)$ and $\sigma_t^2(\zeta)$ be the posterior mean and variance. For any confidence level $\delta_{GP} \in (0, 1)$, there exists a parameter $\beta_t$ such that with probability at least $1 - \delta_{GP}$ over the randomness of the GP:
    \begin{equation}
        |\gamma(\zeta) - \mu_t(\zeta)| \leq \beta_t \sigma_t(\zeta), \quad \forall \zeta \in \mathcal{U}.
    \end{equation}
    The parameter $\beta_t$ depends on $t$, the kernel $k$, and $\delta_{GP}$, and its value can be rigorously derived from information-theoretic results in GP optimization literature [30].
\end{enumerate} 
\end{assumption}

\textit{Remark 1.} This revised Assumption 4 replaces the original vague "well-calibrated" condition with a mathematically precise statement about the validity of the GP's confidence intervals over the entire uncertainty space. Part (ii) is the cornerstone for the subsequent analysis of the UCB-based active learning scheme.

\subsection{Learning-Based Uncertainty Characterization}

A cornerstone of the RAAR-MPC framework is its medium-frequency risk assessment engine, which actively learns a tight, data-driven characterization of the prediction error. This process identifies a set of $K_{crit}$ most critical uncertainty scenarios $\mathcal{C}_{crit}(t)$, which are then used to construct the Learned Prediction-Error Set (LPES), $\mathcal{S}_k(t)$. The following key lemma establishes a formal probabilistic guarantee on the coverage property of this set.

\begin{lemma}[Probabilistic Coverage of the LPES]
\label{lemma:lpes_coverage}
Under Assumption 4, for any desired risk level $\epsilon > 0$ and confidence level $1 - \delta_c > 0$, there exist minimal sample sizes $N_{cand}(\epsilon, \delta_c)$ and a minimal training data size $t_{min}(\epsilon, \delta_c)$ such that when the number of candidate scenarios $|\mathcal{U}_{cand}| \geq N_{cand}$ and the GP training data size $t \geq t_{min}$, the Learned Prediction-Error Set (LPES) constructed at learning cycle $t$ satisfies:
\begin{equation}
    \mathbb{P}_{\zeta_{new} \sim P(\mathcal{U})} \left( \bigcap_{k=0}^{N-1} \left\{ e_k(\zeta_{new}) \in \mathcal{S}_k(t) \right\} \right) \geq 1 - \epsilon,
\end{equation}
where $P(\mathcal{U})$ is the true distribution of the uncertainty. This statement holds with probability at least $1 - \delta_c$ over the randomness of the GP model and the sampling of $\mathcal{U}_{cand}$.
\end{lemma}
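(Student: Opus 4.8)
The plan is to establish the coverage bound by isolating the two independent randomness sources in the statement --- the GP surrogate's estimation error and the finite candidate pool --- splitting the confidence budget as $\delta_c = \delta_{GP} + \delta_{\mathrm{pool}}$, and arguing on the intersection of the associated high-probability events. The backbone is an \emph{a priori} estimate that converts criticality into an error-magnitude bound along the (fixed) nominal plan used in the construction. From the Lyapunov inequality defining $\alpha_L$, the identity $\mathcal{L}_k(\zeta) = V(e_{k+1}) - (1-\alpha_L)V(e_k)$, and $e_0 = 0$, one gets $V(e_{k+1}) \le (1-\alpha_L)V(e_k) + \gamma(\zeta)$, hence $V(e_k) \le \gamma(\zeta)/\alpha_L$ for every $k$, and with $V(e) = e^T P e$ this yields $\|e_k(\zeta)\|_2^2 \le \gamma(\zeta)/(\alpha_L\lambda_{\min}(P))$. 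What I actually use is the contrapositive: a scenario whose error trajectory attains an extreme value in some coordinate at some step must itself be highly critical, so the at most $2 n_x N$ scenarios that define the faces of the axis-aligned bounding box are precisely high-$\gamma$ scenarios.

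I would then certify that the UCB selection retains exactly those face-defining scenarios. Conditioning on the $1 - \delta_{GP}$ event of Assumption~4(ii), and taking the UCB exploration weight at least as large as the confidence parameter $\beta_t$ appearing there, the pessimistic criticality estimate satisfies $\gamma(\zeta) \le \mathrm{PCE}(\zeta) \le \gamma(\zeta) + \eta_t$ for all $\zeta \in \mathcal{U}$, with a slack $\eta_t$ that the information-theoretic bounds of Assumption~4(ii) drive to zero as $t \to \infty$. Choosing $t \ge t_{min}$ makes $\eta_t$ smaller than the criticality gap separating the face-definers from the remaining pool candidates, so ranking by $\mathrm{PCE}$ preserves the top block; provided $K_{crit}$ is not too small --- $K_{crit} \gtrsim 2 n_x N$ --- every face-defining scenario of the \emph{entire} pool then survives selection, and the LPES $\mathcal{S}_k(t)$ coincides at each $k$ with the axis-aligned bounding box $\mathcal{S}_k^{\mathrm{pool}}$ of the full pool.

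It remains to bound the coverage of $\mathcal{S}_k^{\mathrm{pool}}$. Because the pool is i.i.d.\ from $P(\mathcal{U})$, for each coordinate $i$ and step $k$ the event ``$\zeta_{new}$ exceeds the pool maximum (or falls below the pool minimum) of $[e_k]_i$'' is a one-support-dimension scenario event; applying the standard sample-complexity bound for such bounding-box constraints with per-face budget $\epsilon/(2 n_x N)$ and $\delta_{\mathrm{pool}}/(2 n_x N)$, and taking a union bound over the $\le 2 n_x N$ faces, yields $\mathbb{P}_{\zeta_{new} \sim P(\mathcal{U})}\!\left(\exists k : e_k(\zeta_{new}) \notin \mathcal{S}_k^{\mathrm{pool}}\right) \le \epsilon$ with pool-probability at least $1 - \delta_{\mathrm{pool}}$, as soon as $N_{cand} \ge N_{cand}(\epsilon,\delta_c) = \mathcal{O}\!\left(\frac{n_x N}{\epsilon}\ln\frac{n_x N}{\epsilon\,\delta_{\mathrm{pool}}}\right)$. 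Intersecting with the event of the preceding step and substituting $\mathcal{S}_k(t) = \mathcal{S}_k^{\mathrm{pool}}$ gives the claim with total failure probability at most $\delta_c$.

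\textbf{Main obstacle.} The delicate step is the exact identification $\mathcal{S}_k(t) = \mathcal{S}_k^{\mathrm{pool}}$: ensuring that the genuinely box-defining (highest-$\gamma$) scenarios are never displaced from the retained top-$K_{crit}$ block by PCE estimation error couples the learning-theoretic rate of Assumption~4(ii) (through the maximum information gain controlling $\beta_t\sigma_t$) with a gap condition on the criticality order statistics, and implicitly forces $K_{crit}$ to grow like $2 n_x N$. A secondary subtlety is that the UCB-selected training inputs are not i.i.d., so the uniform shrinkage $\eta_t \to 0$ underlying the choice of $t_{min}$ is not automatic; one recovers it either from the exploration term of the UCB acquisition together with the fact that a fresh i.i.d.\ candidate pool is drawn each cycle, or by adding space-filling of the design set as an explicit regularity assumption. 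Finally, if the bounding-box coverage step resists a clean constant, the argument goes through verbatim at the level of $\mathcal{U}_{\mathrm{total},k}(t) = \mathcal{S}_k(t) \oplus \mathcal{B}(t)$, whose additional $\ell_\infty$-ball of radius $\beta_t$ absorbs any residual directional slack.
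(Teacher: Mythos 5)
Your opening step is sound and in fact cleaner than the paper's: the recursion $V(e_{k+1})\le(1-\alpha_L)V(e_k)+\gamma(\zeta)$ with $e_0=0$ gives $\|e_k(\zeta)\|^2\le\gamma(\zeta)/(\alpha_L\lambda_{\min}(P))$, which is an explicit derivation of the one half of the paper's norm--criticality equivalence (its Proposition on $\gamma$ versus $\max_k\|e_k\|^2$) that is actually used. The final scenario-style bound on the coverage of the \emph{pool} bounding box is also standard and correct. But the route is genuinely different from the paper's, and the bridge between these two correct endpoints is where your argument breaks. The paper never identifies $\mathcal{S}_k(t)$ with the pool box and explicitly disavows the scenario-theory analogy (Remark 2); instead it argues that a coverage failure forces $\gamma(\zeta_{new})$ to exceed the best selected criticality, conditions on the UCB success event $\{\max_j\gamma(\zeta^{(j)})\ge\gamma^*-\varepsilon_{opt}\}$, and bounds the residual probability by $1-F_\gamma(\gamma^*-\varepsilon_{opt})$, which absolute continuity of the law of $\gamma$ makes arbitrarily small. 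That argument needs only a regret-type guarantee on the \emph{maximizer} found by UCB, which is what GP-UCB theory actually supplies.

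The genuine gap is your identification $\mathcal{S}_k(t)=\mathcal{S}_k^{\mathrm{pool}}$. The criticality $\gamma(\zeta)$ is a scalar, norm-level, max-over-$k$ summary, whereas the box faces are per-coordinate, per-step extrema. A scenario attaining the pool maximum of $|[e_k]_i|$ for one pair $(i,k)$ can simultaneously have $\|e_{k'}\|$ small for every $k'$ relative to many other pool members (whose errors concentrate in other coordinates or steps), and your own bound $\|e_k\|^2\le\gamma/(\alpha_L\lambda_{\min}(P))$ only lower-bounds its $\gamma$ by a quantity that those other scenarios can exceed by an arbitrary factor. Hence the $2n_xN$ face-definers need not lie in the top-$K_{crit}$ block of the $\gamma$-ranking at all: no gap condition on the order statistics of $\gamma$, no GP accuracy, and no choice $K_{crit}\gtrsim 2n_xN$ repairs this short of $K_{crit}=N_{cand}$. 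When a face-definer is dropped, $\mathcal{S}_k(t)\subsetneq\mathcal{S}_k^{\mathrm{pool}}$ and your scenario bound certifies the wrong set. (The paper's own failure-implication step leans on the same coordinate-versus-norm comparison, but it only needs the one-sided implication ``escape the selected box $\Rightarrow$ more critical than every selected scenario,'' whereas you need the strictly stronger preservation of the entire top block of the ranking.) A secondary, smaller issue you already flag: the uniform bound $\mathrm{PCE}(\zeta)\le\gamma(\zeta)+\eta_t$ with $\eta_t\to 0$ over all of $\mathcal{U}$ requires uniform posterior-variance shrinkage (space-filling designs), which is strictly more than the information-gain bounds behind Assumption~4(ii) provide; the paper avoids needing it by only invoking sub-optimality of the selected maximum. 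To salvage your argument you would have to either redefine the selection to keep, for each $(i,k)$ and each sign, the candidate extremal in that coordinate (making the box identification true by construction), or abandon the pool-box detour and bound $\mathbb{P}(\gamma(\zeta_{new})>\max_j\gamma(\zeta^{(j)}))$ directly as the paper does.
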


\begin{proof}
The proof proceeds by first establishing a formal link between the criticality measure $\gamma(\zeta)$ and the prediction error norm, then showing that a coverage failure implies the discovery of a more critical scenario than those already known, and finally bounding the probability of such a discovery using GP confidence bounds.

We formalize the intuition that a larger error corresponds to a higher criticality value.
\begin{proposition}
\label{prop:gamma_error_relation}
There exist constants $c_1, c_2 > 0$ and a strictly increasing function $h: \mathbb{R}_+ \to \mathbb{R}_+$ such that for all $\zeta \in \mathcal{U}$:
\begin{equation}
    c_1 \gamma(\zeta) \leq \max_{k \in \{0,..,N-1\}} \|e_k(\zeta)\|^2 \leq c_2 \gamma(\zeta).
\end{equation}
This implies the existence of a strictly increasing function $h$ such that $\max_k \|e_k(\zeta)\| \leq h(\gamma(\zeta))$.
\end{proposition}
\begin{proof}[Proof of Proposition \ref{prop:gamma_error_relation}]
The Lyapunov function $V(e) = e^T P_e e$ is norm-equivalent to $\|e\|^2$, i.e., $\lambda_{min}(P_e)\|e\|^2 \leq V(e) \leq \lambda_{max}(P_e)\|e\|^2$. The error dynamics (Eq. 6) show that $e_{k+1}$ is an affine function of $e_k$ and $\zeta$. By recursion, $e_k$ is an affine function of the initial error (zero) and the uncertainty sequence up to step $k-1$. Thus, $\max_k \|e_k(\zeta)\|$ is a continuous function of $\zeta$. Similarly, $\gamma(\zeta) = \max_k \{V(e_{k+1}(\zeta)) - (1-\alpha_e)V(e_k(\zeta))\}$ is a continuous function of $\zeta$. On the compact set $\mathcal{U}$, both functions are bounded. The norm-equivalence of $V(e)$ and $\|e\|^2$ and the structure of the error dynamics establish the existence of such constants $c_1$ and $c_2$. The existence of $h$ follows directly.
\end{proof}

Let $E_{fail}$ be the event that a new scenario $\zeta_{new}$ causes a coverage failure.
\begin{proposition}[Coverage Failure Implication]
\label{prop:failure_implication}
The event $E_{fail} = \{\zeta_{new} | \exists k: e_k(\zeta_{new}) \notin \mathcal{S}_k(t)\}$ implies the event $B = \{\zeta_{new} | \gamma(\zeta_{new}) > \max_{\zeta^{(j)} \in \mathcal{C}_{crit}(t)} \gamma(\zeta^{(j)})\}$.
\end{proposition}
\begin{proof}[Proof of Proposition \ref{prop:failure_implication}]
The set $\mathcal{S}_k(t)$ is the axis-aligned bounding box of the error vectors from $\mathcal{C}_{crit}(t)$, i.e., its support vector is $[s_{k|t}]_i = \max_j |[e_k(\zeta^{(j)})]_i|$. The condition $e_k(\zeta_{new}) \notin \mathcal{S}_k(t)$ implies that there exists at least one dimension $i$ such that $|[e_k(\zeta_{new})]_i| > [s_{k|t}]_i$. This directly leads to $\max_k \|e_k(\zeta_{new})\| > \max_j (\max_k \|e_k(\zeta^{(j)})\|)$. Applying Proposition \ref{prop:gamma_error_relation}, we have:
\begin{equation}
\begin{split}  
   \gamma(\zeta_{new}) &\geq \frac{1}{c_2} \max_k \|e_k(\zeta_{new})\|^2 \\&> \frac{1}{c_2} \max_j (\max_k \|e_k(\zeta^{(j)})\|^2) \\&\geq \frac{c_1}{c_2} \max_j \gamma(\zeta^{(j)})
\end{split}
\end{equation}
Since $c_1/c_2 > 0$, this establishes the implication $E_{fail} \Rightarrow B$.
\end{proof}

Let $\gamma^* = \sup_{\zeta \in \mathcal{U}} \gamma(\zeta)$. We bound the probability of event $B$ by analyzing the performance of the UCB selection strategy. Let $\mathcal{E}_{opt}(\varepsilon_{opt}) = \{\max_{j} \gamma(\zeta^{(j)}) < \gamma^* - \varepsilon_{opt}\}$ be the event that the UCB algorithm fails to find a scenario with near-optimal criticality.
The probability of event B can be bounded using the law of total probability:
\begin{equation}
\begin{split}
    \mathbb{P}(B) &= \mathbb{P}(B|\mathcal{E}_{opt}) \mathbb{P}(\mathcal{E}_{opt}) + \mathbb{P}(B|\mathcal{E}_{opt}^c) \mathbb{P}(\mathcal{E}_{opt}^c)\\& \leq \mathbb{P}(\mathcal{E}_{opt}) + \mathbb{P}(B|\mathcal{E}_{opt}^c)
\end{split}
\end{equation}
The term $\mathbb{P}(\mathcal{E}_{opt})$ is the sub-optimality gap of the GP-UCB algorithm, which can be bounded. For a sufficiently large candidate set size $N_{cand}$ and training data size $t$, GP optimization theory guarantees that this probability is small. Specifically, it can be bounded by a term that decreases with $N_{cand}$ (controlling the discretization error of $\mathcal{U}$) and $t$ (improving the GP model accuracy) [30]. Let this bound be $\mathbb{P}(\mathcal{E}_{opt}) \leq \delta_{opt}(N_{cand}, t)$.

The second term is $\mathbb{P}(B|\mathcal{E}_{opt}^c) \leq \mathbb{P}(\gamma(\zeta_{new}) > \gamma^* - \varepsilon_{opt})$. Let $F_\gamma$ be the cumulative distribution function (CDF) of the random variable $\gamma(\zeta_{new})$. This probability is $1 - F_\gamma(\gamma^* - \varepsilon_{opt})$. Since $\gamma$ is a continuous function on a compact set, its distribution is absolutely continuous. Therefore, for any $\epsilon' > 0$, we can choose $\varepsilon_{opt}$ small enough such that $1 - F_\gamma(\gamma^* - \varepsilon_{opt}) < \epsilon'$.

Combining these bounds, the total probability of coverage failure is bounded by:
\begin{equation}
    \mathbb{P}(E_{fail}) \leq \mathbb{P}(B) \leq \delta_{opt}(N_{cand}, t) + (1 - F_\gamma(\gamma^* - \varepsilon_{opt})).
\end{equation}
To achieve a final risk of $\epsilon$, we can allocate the error budget. For instance, choose $\varepsilon_{opt}$ such that $1 - F_\gamma(\gamma^* - \varepsilon_{opt}) < \epsilon/2$. Then, choose $N_{cand}$ and $t_{min}$ large enough such that for $t \ge t_{min}$, we have $\delta_{opt}(N_{cand}, t) < \epsilon/2$. This ensures $\mathbb{P}(E_{fail}) < \epsilon$. The entire argument holds with probability at least $1-\delta_{GP}$ from Assumption 4, so we can set $\delta_c = \delta_{GP}$ or incorporate it into the budget. This completes the proof of Lemma \ref{lemma:lpes_coverage}.
\end{proof}

\textit{Remark 2.} This lemma is of paramount importance as it formally connects the output of our data-driven risk assessment engine to the requirements of the robust MPC controller. It replaces any flawed analogy to scenario theory with a rigorous argument based on the properties of Gaussian Processes and UCB-based active learning. It provides the quantifiable guarantee that allows us to reason about the probabilistic satisfaction of state and input constraints for the true system.

\subsection{Recursive Feasibility}

The recursive feasibility of the proposed MPC scheme ensures that the optimization problem is solvable at every time step. We establish this property by showing that a feasible solution at an arbitrary time step $t$ can be used to construct a feasible candidate solution for the subsequent time step $t+1$.

\begin{theorem}[Recursive Feasibility of RAAR-MPC]
\label{thm:recursive_feasibility}
Let Assumptions 1-4 hold. Suppose at time $t$, the RAAR-MPC optimization problem (8) is feasible for a measured state $x_t$. Then, the following statements hold:

\begin{enumerate}
    \item[a)] Robust Recursive Feasibility: The optimization problem (8) remains feasible for the state $x_{t+1}$ at the subsequent time step $t+1$ for all possible uncertainty realizations, potentially with a non-zero slack variable $\varepsilon_{t+1} \ge 0$.
    
    \item[b)] High-Probability Recursive Feasibility: Let $E_{cover}(t)$ be the event that the one-step prediction error is contained within its learned set, i.e., $e_{1|t} \in \mathcal{S}_1(t)$. Conditioned on this event, which occurs with probability at least $1-\epsilon_c$ by Lemma \ref{lemma:lpes_coverage}, and under a mild condition on the adaptation rate (Assumption \ref{assump:adaptation_rate} below), the optimization problem (8) is feasible at time $t+1$ with a slack variable $\varepsilon_{t+1} = 0$.
\end{enumerate}
\end{theorem}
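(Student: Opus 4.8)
The plan is to establish both claims by producing an explicit feasible candidate for problem \eqref{eq:robust_mpc_qp} at time $t+1$ out of the optimizer at time $t$ via a ``shift, error-correct, append-terminal-law'' construction, and then verifying each constraint --- with a uniformly bounded nonzero slack for (a), and with $\varepsilon_{t+1}=0$ under the coverage event for (b). Concretely, let $(\mathbf{z}_t^\star,\mathbf{v}_t^\star)$ solve \eqref{eq:robust_mpc_qp} at $t$; since $e_{0|t}=0$ the applied input is $u_t=v_{0|t}^\star$ and the measured successor is $x_{t+1}=z_{1|t}^\star+e_{1|t}$ with $e_{1|t}=(A_t-A_{\mathrm{nom}})x_t+(B_t-B_{\mathrm{nom}})v_{0|t}^\star+Gd_t$ the realized one-step error of \eqref{eq:error_dynamics}. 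Define $\tilde z_{0|t+1}=x_{t+1}$, $\tilde v_{k|t+1}=v_{k+1|t}^\star+K_e\xi_k$ for $k=0,\dots,N-2$ and $\tilde v_{N-1|t+1}=K_f\tilde z_{N-1|t+1}$, where $\xi_k:=\tilde z_{k|t+1}-z_{k+1|t}^\star$. Substituting the nominal recursion \eqref{eq:nominal_dynamics} gives $\xi_0=e_{1|t}$ and $\xi_{k+1}=A_K\xi_k$ with $A_K:=A_{\mathrm{nom}}+B_{\mathrm{nom}}K_e$, so $\xi_k=A_K^k e_{1|t}$; since $A_K$ is Schur (Assumption 2), $\|\xi_k\|\le c_K\rho^k\|e_{1|t}\|$ for constants $c_K\ge1$, $\rho\in(0,1)$. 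The candidate's nominal-dynamics and initial-state equalities hold by construction.

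For part (a): feasibility at $t$ yields $z_{k+1|t}^\star\in\mathcal{X}\ominus\mathcal{U}_{\mathrm{total},k+1}(t)$ (the slack $\varepsilon_t^\star\ge0$ only shrinks the feasible region, so this holds irrespective of its value), and analogously for the inputs and for $z_{N|t}^\star\in\mathcal{X}_f$. Hence at $t+1$ each softened state/input constraint is violated by at most $\max_i|C_{x,i}\xi_k|$ plus the change in the tightening term between the two cycles; by Assumption 1 together with the uniform bounds $\beta_t\le\beta_{\max}$ and $[s_{k|t}]_i\le\bar s$ (every critical error trajectory in \eqref{eq:lpes_support_vector} is generated from the compact set $\Delta\times\mathcal{D}$ along nominal plans that live in a compact set), all of these quantities are uniformly bounded, so a finite $t$-independent $\varepsilon_{t+1}\ge0$ makes every softened constraint hold. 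The terminal inclusion $\tilde z_{N|t+1}=(A_{\mathrm{nom}}+B_{\mathrm{nom}}K_f)\tilde z_{N-1|t+1}\in\mathcal{X}_f$ follows from the robust positive invariance of $\mathcal{X}_f$ under bounded perturbations granted by Assumption 3. This proves (a).

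For part (b): condition on $E_{\mathrm{cover}}(t)=\{e_{1|t}\in\mathcal{S}_1(t)\}$, which holds with probability at least $1-\epsilon_c$ by Lemma \ref{lemma:lpes_coverage} (the $N$-step coverage statement contains the $k=1$ event). It then suffices to establish, for $k=0,\dots,N-2$, the set inclusion
\begin{equation}
A_K^k\,\mathcal{S}_1(t)\ \oplus\ \mathcal{S}_k(t+1)\ \oplus\ \mathcal{B}(t+1)\ \subseteq\ \mathcal{S}_{k+1}(t)\ \oplus\ \mathcal{B}(t),
\label{eq:key_inclusion}
\end{equation}
plus $\tilde z_{N-1|t+1}\in\mathcal{X}_f$ for the terminal piece: adding $z_{k+1|t}^\star$ to \eqref{eq:key_inclusion} and using feasibility at $t$ gives $\tilde z_{k|t+1}\oplus\mathcal{U}_{\mathrm{total},k}(t+1)\subseteq\mathcal{X}$, i.e. $\tilde z_{k|t+1}\in\mathcal{X}\ominus\mathcal{U}_{\mathrm{total},k}(t+1)$; the analogous inclusion with $K_e\xi_k$ and the $K_e$-scaled sets handles the input constraints; and the terminal constraints at $k=N-1,N$ follow from Assumption 3 together with $\|\xi_{N-1}\|\le c_K\rho^{N-1}\bar s$. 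Inclusion \eqref{eq:key_inclusion} is obtained by combining three facts: (i) $A_K^k\mathcal{S}_1(t)$ is a \emph{contracting} image of a set already budgeted at time $t$, whose size decays geometrically in $k$; (ii) because $\mathcal{S}_{k+1}(t)$ is (an axis-aligned outer box of) the reachable set of the error dynamics \eqref{eq:error_dynamics} after $k+1$ steps, it contains $A_K\mathcal{S}_k(t)$ enlarged by one additional step of propagated mismatch, which furnishes structural ``head room'' over $\mathcal{S}_k(t+1)$; and (iii) Assumption \ref{assump:adaptation_rate}, which caps the per-cycle drift of the learned set and the margin --- $\mathcal{S}_k(t+1)\subseteq\mathcal{S}_k(t)\oplus\mathcal{D}_{\mathcal S}$ and $\beta_{t+1}\le\beta_t+\Delta_\beta$, with $\mathcal{D}_{\mathcal S},\Delta_\beta$ small relative to that head room (automatic on the steps where the LPES is not re-computed and where, from \eqref{eq:final_beta_update}, $|\beta_{t+1}-\beta_t|\le\alpha_t+\gamma_t\beta_{\max}$). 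Chaining these establishes \eqref{eq:key_inclusion}, hence feasibility at $t+1$ with $\varepsilon_{t+1}=0$ on $E_{\mathrm{cover}}(t)$.

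The crux --- and the main obstacle --- is making ingredients (b)(ii)--(iii) quantitative and uniform in $(k,t)$: one must show that the single extra propagation step embedded in $\mathcal{S}_{k+1}(t)$ relative to $\mathcal{S}_k(t)$, surviving the lossy axis-aligned outer approximation \eqref{eq:lpes_support_vector}, genuinely dominates the combined effect of the realized correction $A_K^k e_{1|t}$ and the adaptive drift $\mathcal{D}_{\mathcal S}\oplus(\beta_{t+1}-\beta_t)\mathbb{B}_\infty$. This is exactly the content Assumption \ref{assump:adaptation_rate} must be crafted to supply --- a bound of the form $\max\{\Delta_\beta,\ \sup_i[\mathcal{D}_{\mathcal S}]_i\}\le\eta(1-\rho)\,\underline s$ for a reserve $\underline s>0$ built into the LPES construction --- and checking that the low-frequency law \eqref{eq:final_beta_update} (with summable $\alpha_t,\gamma_t$) and the $M$-step LPES refresh cadence honour it is where the real work lies; the remaining items (candidate dynamics, compactness bounds, terminal invariance) are routine.
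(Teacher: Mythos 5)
Your proposal is correct and follows essentially the same route as the paper: the same shift-and-append candidate with the one-step error propagated through the ancillary closed loop, the same compactness/bounded-slack argument for part (a), and for part (b) your key inclusion \eqref{eq:key_inclusion} is precisely the paper's Assumption~\ref{assump:adaptation_rate} (Coherent Adaptation Rate), which the paper likewise invokes as a standing assumption rather than deriving. Your explicit addition of the $K_e\xi_k$ correction to the candidate input (so that the deviation genuinely evolves under $A_{\mathrm{nom}}+B_{\mathrm{nom}}K_e$) and your candid identification of the unquantified ``head-room'' needed to actually verify the inclusion are, if anything, slightly more careful than the paper's treatment.
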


\begin{proof}
Let $z_t^* = \{z_{k|t}^*\}_{k=0}^N$ and $v_t^* = \{v_{k|t}^*\}_{k=0}^{N-1}$ be the optimal nominal sequences obtained by solving (8) at time $t$, with an optimal cost $J^*(x_t)$ and assuming $\varepsilon_t=0$. The true state at $t+1$ is $x_{t+1} = z_{1|t}^* + e_{1|t}$, where $e_{1|t}$ is the one-step prediction error. We construct a candidate solution $(\hat{z}_{t+1}, \hat{v}_{t+1})$ for the optimization problem at time $t+1$.

Candidate Solution Construction:
The candidate control sequence is formed by shifting the optimal plan from time $t$ and applying the terminal control law at the end of the horizon:
\begin{equation}
\label{eq:candidate_control}
    \hat{v}_{k|t+1} = 
    \begin{cases}
        v_{k+1|t}^* & \text{for } k = 0, \dots, N-2 \\
        K_f \hat{z}_{N-1|t+1} & \text{for } k = N-1
    \end{cases}
\end{equation}
The corresponding candidate nominal state sequence starts from the new measured state $\hat{z}_{0|t+1} = x_{t+1}$ and evolves according to the nominal dynamics:
\begin{equation}
\label{eq:candidate_state}
    \hat{z}_{k+1|t+1} = A_{nom} \hat{z}_{k|t+1} + B_{nom} \hat{v}_{k|t+1}.
\end{equation}
For $k=0, \dots, N-2$, this implies $\hat{z}_{k+1|t+1} = z_{k+2|t}^*$. The initial state is $\hat{z}_{0|t+1} = z_{1|t}^* + e_{1|t}$.

Proof of a) Robust Recursive Feasibility:
We must show that the candidate solution satisfies all constraints at $t+1$, possibly with $\varepsilon_{t+1} > 0$. The critical constraints are the state and input constraints for $k=0, \dots, N-1$.
From the construction, the candidate state can be expressed as $\hat{z}_{k|t+1} = z_{k+1|t}^* + A_{cl,e}^k e_{1|t}$, where $A_{cl,e}$ is the closed-loop error dynamics matrix associated with the ancillary controller $K_e$. At time $t$, feasibility implies $z_{k+1|t}^* \in \mathcal{X} \ominus \mathcal{U}_{total, k+1}(t)$. At time $t+1$, the constraint is $\hat{z}_{k|t+1} \in \mathcal{X} \ominus \mathcal{U}_{total, k}(t+1)$.
The potential violation of this constraint by the candidate solution is due to two factors: (1) the un-cancelled part of the propagated error $A_{cl,e}^k e_{1|t}$, and (2) the change in the tightening set from $\mathcal{U}_{total, k+1}(t)$ to $\mathcal{U}_{total, k}(t+1)$.
Under Assumption 1 (compactness), the one-step error $e_{1|t}$ is bounded. The update laws for $\beta_t$ (Eq. 25) and the construction of $\mathcal{S}(t)$ from bounded data ensure that the change in the set $\mathcal{U}_{total}$ is also bounded. Therefore, the magnitude of any potential constraint violation by the candidate solution is uniformly bounded. A sufficiently large but finite slack variable $\varepsilon_{t+1}$ can thus always be found to ensure feasibility. The satisfaction of the terminal constraint $\hat{z}_{N|t+1} \in \mathcal{X}_f$ follows from the definition of the candidate control and the properties of the terminal set (Assumption 3). This guarantees that the optimization problem is always solvable.

Proof of b) High-Probability Recursive Feasibility:
We now prove that under the high-probability event $E_{cover}(t)$, the candidate solution is feasible with $\varepsilon_{t+1}=0$. This requires a more detailed analysis of the constraint satisfaction. We introduce a mild assumption on the adaptation rate.
\begin{assumption}[Coherent Adaptation Rate]
\label{assump:adaptation_rate}
The LPES and adaptive margin are updated such that the uncertainty sets evolve coherently over one time step. Specifically, for $k=0, \dots, N-2$:
\begin{equation}
\label{eq:coherent_adaptation}
    \mathcal{R}_k(\mathcal{S}_1(t)) \oplus \mathcal{S}_k(t+1) \oplus B(\beta_{t+1}) \subseteq \mathcal{S}_{k+1}(t) \oplus B(\beta_t),
\end{equation}
where $\mathcal{R}_k(\mathcal{S}_1(t))$ is the $k$-step reachable set of the ancillary error dynamics starting from the set $\mathcal{S}_1(t)$.
\end{assumption}
\textit{Remark on Assumption \ref{assump:adaptation_rate}:} This assumption is reasonable. The ancillary controller $K_e$ is stabilizing (Assumption 2), so the reachable set $\mathcal{R}_k$ should contract. The LPES $\mathcal{S}(t)$ is typically updated only every $M \gg 1$ steps, so for most $t$, $\mathcal{S}(t+1)=\mathcal{S}(t)$. The adaptive margin $\beta_t$ evolves slowly due to a small learning rate. Thus, the condition essentially requires that the learned uncertainty does not exhibit pathological, fast-changing behavior, which is expected from a converged learning process.

Now we check the state constraint for the candidate solution $\hat{z}_{k|t+1}$ for $k=0, \dots, N-2$. We need to show that for any $w' \in \mathcal{U}_{total,k}(t+1)$, we have $\hat{z}_{k|t+1} + w' \in \mathcal{X}$.
The state of the candidate solution at step $k$ of the horizon at time $t+1$ is the sum of the nominal plan from the previous step and the evolution of the one-step error under the ancillary controller: $\hat{z}_{k|t+1} = z_{k+1|t}^* + \text{err}_k$, where $\text{err}_k \in \mathcal{R}_k(\{e_{1|t}\})$.

At time $t$, feasibility implied that for any $w \in \mathcal{U}_{total, k+1}(t)$:
\begin{equation}
    z_{k+1|t}^* + w \in \mathcal{X}.
\end{equation}
We are conditioned on the event $E_{cover}(t)$, which means $e_{1|t} \in \mathcal{S}_1(t)$. This implies that the propagated error $\text{err}_k$ is contained in the reachable set $\mathcal{R}_k(\mathcal{S}_1(t))$.
We need to check if $\hat{z}_{k|t+1} + w' \in \mathcal{X}$ for all $w' \in \mathcal{U}_{total, k}(t+1) = \mathcal{S}_k(t+1) \oplus B(\beta_{t+1})$. Let's expand the expression:
\begin{equation}
    \underbrace{z_{k+1|t}^*}_{\text{nominal plan}} + \underbrace{\text{err}_k}_{\text{propagated error}} + \underbrace{w'}_{\text{new uncertainty}} \in \mathcal{X}.
\end{equation}
We know $\text{err}_k + w' \in \mathcal{R}_k(\mathcal{S}_1(t)) \oplus \mathcal{S}_k(t+1) \oplus B(\beta_{t+1})$.
Using Assumption \ref{assump:adaptation_rate}, this sum is a subset of the uncertainty the previous plan was robust against:
\begin{equation}
    \text{err}_k + w' \in \mathcal{S}_{k+1}(t) \oplus B(\beta_t) = \mathcal{U}_{total, k+1}(t).
\end{equation}
Since $z_{k+1|t}^*$ is feasible with respect to any disturbance in $\mathcal{U}_{total, k+1}(t)$, it follows that the candidate state $\hat{z}_{k|t+1}$ satisfies its constraint without needing a slack variable. A similar argument holds for the input constraints, given the structure of the ancillary controller. The terminal constraint is satisfied as in part (a).
Thus, conditioned on the high-probability event $E_{cover}(t)$, the constructed candidate solution is feasible with $\varepsilon_{t+1}=0$.
\end{proof}


\subsection{Closed-Loop Stability and Convergence Analysis}

Having established recursive feasibility, we now analyze the long-term behavior of the closed-loop system. The objective is to prove that the system state $x_t$ and the adaptive robustness margin $\beta_t$ are jointly stable, and that $\beta_t$ converges to a vicinity of an ideal value that ensures the satisfaction of the chance constraint. This requires analyzing an augmented system comprising both the physical and the learning states.

To facilitate this analysis, we introduce the following assumption regarding the properties of the adaptive mechanism.

\begin{assumption}[Properties of the Adaptive Mechanism]
\label{assum:adaptive_mech}
\begin{enumerate}
    \item[\textit{i)}] (Bounded Parameters) The learning rates and the reference margin are bounded: $\alpha_t \le \alpha_{\max}$ and $\gamma_t \le \gamma_{\max}$ for all $t \ge 0$, and $\bar{\beta}$ is a known constant.
    \item[\textit{ii)}] (Monotonicity of Learning Probability) Let $P(L_t|\beta_t)$ denote the true probability of the learning event $L_t \equiv \{x | h(x) > -m_s(t)\}$, conditioned on a fixed margin $\beta_t$. This probability is assumed to be continuously differentiable and strictly decreasing in $\beta_t$, with a derivative uniformly bounded away from zero: $\frac{\partial P(L_t|\beta_t)}{\partial \beta_t} \le -c_p < 0$ for some constant $c_p > 0$.
    \item[\textit{iii)}] (Bounded Estimation Error) The error of the buffer probability estimator, defined as $\nu_t = \hat{P}_t(B_t) - P(B_t|\beta_t)$, where $P(B_t|\beta_t)$ is the true probability of the buffer event, has a bounded second moment: $\mathbb{E}[\nu_t^2] \le \nu_{\max}^2$.
\end{enumerate}
\end{assumption}

\textit{Remark 6:} Assumption \ref{assum:adaptive_mech} formalizes the well-posedness of the proposed self-correcting mechanism. Part \textit{i)} is a standard implementation choice. Part \textit{ii)} represents a key structural property, stating that increasing the robustness margin $\beta_t$ effectively and predictably reduces the frequency of learning events. This inherent negative feedback is essential for the stability of the adaptation process. Part \textit{iii)} posits that the sliding-window estimator for the buffer probability is sufficiently accurate, a condition supported by laws of large numbers for an appropriately chosen window size $W$.

Let $\beta^*$ denote the ideal, generally unknown, robustness margin that would make the physical constraint violation probability exactly equal to the target risk $\delta$, i.e., $P(h(x) > 0 | \beta = \beta^*) = \delta$. The stability analysis is centered on the following augmented Lyapunov function candidate:
\begin{equation}
    V(x_t, \beta_t) = J_t^*(x_t) + c_\beta (\beta_t - \beta^*)^2,
    \label{eq:lyap_augmented}
\end{equation}
where $J_t^*(x_t)$ is the optimal value of the MPC objective function at time $t$ for state $x_t$, and $c_\beta > 0$ is a positive weighting coefficient to be determined.

\begin{theorem}[Stochastic Stability and Convergence]
\label{thm:stochastic_stability}
Consider the system (1) under the RAAR-MPC control law with the update mechanism (25). Under Assumptions 1-5, for a sufficiently large choice of the weighting coefficient $c_\beta$, the closed-loop system is stochastically stable in the sense that the state $x_t$ and the adaptive margin $\beta_t$ are ultimately bounded in expectation. Specifically, there exist finite positive constants $B_x$ and $B_\beta$ such that:
\begin{equation}
    \limsup_{t \to \infty} \mathbb{E}[\|x_t\|^2] \le B_x \quad \text{and} \quad \limsup_{t \to \infty} \mathbb{E}[(\beta_t - \beta^*)^2] \le B_\beta
\end{equation}
\end{theorem}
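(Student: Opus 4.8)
The plan is to show that the augmented function $V(x_t,\beta_t)$ in \eqref{eq:lyap_augmented} is a stochastic Lyapunov function whose one-step conditional drift is negative definite in $(x_t,\beta_t-\beta^*)$ outside a bounded region, and then to close the argument with a standard stochastic-Lyapunov iteration. Working with the natural filtration $\mathcal{F}_t$ generated by the realized trajectory and the learning quantities up to time $t$, I would bound $\Delta_t := \mathbb{E}[V(x_{t+1},\beta_{t+1})\mid \mathcal{F}_t] - V(x_t,\beta_t)$ by splitting it as $\Delta_t = \Delta_t^J + \Delta_t^\beta$, with the MPC-cost drift $\Delta_t^J := \mathbb{E}[J_{t+1}^*(x_{t+1})\mid\mathcal{F}_t] - J_t^*(x_t)$ and the margin drift $\Delta_t^\beta := c_\beta\big(\mathbb{E}[(\beta_{t+1}-\beta^*)^2\mid\mathcal{F}_t] - (\beta_t-\beta^*)^2\big)$, and bounding each separately.

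For $\Delta_t^J$ I would reuse the shifted candidate $(\hat z_{t+1},\hat v_{t+1})$ constructed in the proof of Theorem~\ref{thm:recursive_feasibility}. Optimality gives $J_{t+1}^*(x_{t+1}) \le \hat J_{t+1}(x_{t+1})$, and the usual telescoping of the shifted sequence together with the terminal conditions (Assumption 3) yields $\hat J_{t+1}(x_{t+1}) - J_t^*(x_t) \le -\ell(x_t,v_{0|t}^*) + \Xi_t + \rho\,\varepsilon_{t+1}$, where $\ell(z,v):=\|z\|_Q^2+\|v\|_R^2$ and $\Xi_t$ collects the re-initialization perturbation due to starting the candidate from $x_{t+1}=z_{1|t}^*+e_{1|t}$ rather than from $z_{1|t}^*$, plus the effect of the time-variation of the tightening sets $\mathcal{U}_{\text{total},k}(t)\to\mathcal{U}_{\text{total},k}(t+1)$. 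By compactness (Assumption 1), $e_{1|t}$, the nominal trajectory, and $|\beta_{t+1}-\beta_t| = O(\alpha_{\max}+\gamma_{\max})$ are uniformly bounded, so $\Xi_t \le L_e\|e_{1|t}\| + L_e'\|e_{1|t}\|^2 + L_\beta|\beta_{t+1}-\beta_t|$ for fixed constants; taking $\mathbb{E}[\cdot\mid\mathcal{F}_t]$ and using that $\varepsilon_{t+1}$ is bounded on the complement of $E_{\text{cover}}(t)$ (probability $\le \epsilon_c$ by Lemma~\ref{lemma:lpes_coverage}, with $\varepsilon_{t+1}=0$ on $E_{\text{cover}}(t)$ by Theorem~\ref{thm:recursive_feasibility}(b)) gives $\Delta_t^J \le -\lambda_{\min}(Q)\|x_t\|^2 + C_J$ for a finite constant $C_J$, using $z_{0|t}^*=x_t$ (and, if $Q$ is only positive semidefinite, invoking the standard detectability/IOSS argument to recover a term in $\|x_t\|^2$).

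For $\Delta_t^\beta$ I would drop the projection via non-expansiveness of $\Pi_{\mathcal{B}}$ in \eqref{eq:final_beta_update}, expand the square, and condition on $\mathcal{F}_t$. The key term is $\mathbb{E}[e_{\text{SA}}(t)\mid\mathcal{F}_t] = P(L_t\mid\beta_t) - \delta_L(t)$; substituting the decomposition of Theorem~\ref{thm:prob_decomp} and $\delta_L(t)=\delta+\hat{\mathbb{P}}_t(B_t)$ gives $\mathbb{E}[e_{\text{SA}}(t)\mid\mathcal{F}_t] = g(\beta_t) - \nu_t$ with $g(\beta):=P(h(x)>0\mid\beta)-\delta$ and $\nu_t$ as in Assumption~\ref{assum:adaptive_mech}(iii). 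The dynamic-target compensation is designed precisely so that $g(\beta^*)=0$, and the monotonicity of Assumption~\ref{assum:adaptive_mech}(ii), transferred through the coupling $m_s=c_m\beta$, furnishes a uniform slope bound $(\beta_t-\beta^*)\,g(\beta_t) \le -c_g(\beta_t-\beta^*)^2$. A Young split of the cross term $\alpha_t|\beta_t-\beta^*||\nu_t|$ (using $\mathbb{E}[\nu_t^2]\le\nu_{\max}^2$) and of the mean-reversion term $\gamma_t(\beta_t-\beta^*)(\beta_t-\bar\beta)$ (using $\beta_t,\bar\beta\in[0,\beta_{\max}]$ and $\gamma_t\ll\alpha_t$), together with the bounded second moment of $e_{\text{SA}}(t)$ and $\alpha_t\ge\alpha_{\min}$ from Assumption~\ref{assum:adaptive_mech}(i), then yields $\Delta_t^\beta \le -c_\beta\alpha_{\min}c_g(\beta_t-\beta^*)^2 + c_\beta C_\beta$ for a finite $C_\beta = O(\alpha_{\max}^2+\gamma_{\max}^2+\nu_{\max}^2)$.

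Adding the two bounds, the only remaining positive contribution is the part of $\Xi_t$ that is first-order in $|\beta_t-\beta^*|$ (which appears because $J^*$ depends on the tightening and hence on $\beta$); a further Young split absorbs it into $\tfrac12 c_\beta\alpha_{\min}c_g(\beta_t-\beta^*)^2$ plus a constant, which is legitimate once $c_\beta$ is taken large --- this is exactly where the ``sufficiently large $c_\beta$'' hypothesis is used. One then obtains $\Delta_t \le -\eta\big(\|x_t\|^2+(\beta_t-\beta^*)^2\big) + C$ with $\eta=\min\{\lambda_{\min}(Q),\tfrac12 c_\beta\alpha_{\min}c_g\}>0$ and finite $C$. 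Using a two-sided quadratic sandwich $\underline c\,(\|x_t\|^2+(\beta_t-\beta^*)^2) \le V(x_t,\beta_t) \le \bar c\,(\|x_t\|^2+(\beta_t-\beta^*)^2)$ on the compact feasible set (continuity and boundedness of the QP value function, plus the quadratic lower bound from the stage cost), the drift becomes $\mathbb{E}[V(x_{t+1},\beta_{t+1})\mid\mathcal{F}_t] \le (1-\eta/\bar c)\,V(x_t,\beta_t) + C$, and iterating this linear recursion in total expectation (with $1-\eta/\bar c\in(0,1)$) gives $\limsup_{t\to\infty}\mathbb{E}[V(x_t,\beta_t)] \le \bar cC/\eta$, hence the claimed bounds with $B_x=\bar cC/(\eta\lambda_{\min}(Q))$ and $B_\beta=\bar cC/(\eta c_\beta)$. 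The step I expect to be the main obstacle is controlling the coupling between the physical and learning subsystems: $J_t^*$ is a function of the tightening, hence of $\beta_t$ (and of the LPES $s_k(t)$), so the two parts of $V$ are not independent and the candidate-cost bound $\Xi_t$ carries terms in $|\beta_{t+1}-\beta_t|$ and in the sensitivity of $J^*$ to its right-hand side; making this rigorous needs a Lipschitz/sensitivity estimate for the parametric QP value function on the compact feasible set, plus the large-$c_\beta$ domination argument. A secondary difficulty is establishing the effective monotonicity $(\beta-\beta^*)g(\beta)\le -c_g(\beta-\beta^*)^2$ for the \emph{physical} violation probability, since Assumption~\ref{assum:adaptive_mech}(ii) is stated only for the learning-event probability $P(L_t\mid\beta)$; bridging the two is precisely where the dynamic target $\delta_L(t)$ and the estimator-accuracy bound of Assumption~\ref{assum:adaptive_mech}(iii) must do the work.
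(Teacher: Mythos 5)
Your proposal follows essentially the same route as the paper's proof: the same augmented Lyapunov function, the same split into the MPC cost-to-go drift (bounded via the shifted candidate to give $-c_1\|x_t\|^2 + C_J$) and the margin drift (bounded via the probabilistic decomposition of $P(L_t\mid\beta_t)$, the monotonicity of Assumption 5(ii), and Young's inequality), combined into a negative-definite drift plus constant. Your treatment is in fact somewhat more careful than the paper's on two points the paper elides --- the coupling of $J_t^*$ to $\beta_t$ through the tightening (which is where ``sufficiently large $c_\beta$'' actually earns its keep) and the gap between monotonicity of $P(L_t\mid\beta)$ and the needed monotonicity of the physical violation probability --- but these are refinements of the same argument, not a different one.
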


\begin{proof}
The proof is based on analyzing the one-step expected drift of the Lyapunov function \eqref{eq:lyap_augmented}, conditioned on the filtration $\mathcal{F}_t$ which contains all information up to time $t$. We analyze the drift of each component of $V(x_t, \beta_t)$ separately.

1. Analysis of the Cost-to-Go Term $J_t^*(x_t)$:
Using a standard shifting argument from MPC theory, the optimal cost at time $t+1$ is upper-bounded by the cost of a feasible candidate solution constructed from the optimal solution at time $t$. This leads to the inequality:
\begin{equation}
    J_{t+1}^*(x_{t+1}) \le J_t^*(x_t) - \ell(z_{0|t}^*, v_{0|t}^*) + \Delta J_{\text{err},t} + \rho \varepsilon_{t+1}^{\text{cand}}
\end{equation}
Here, $\ell(z_{0|t}^*, v_{0|t}^*)$ is the stage cost at time $t$, which is positive definite with respect to the nominal state and input, i.e., $\ell(z_{0|t}^*, v_{0|t}^*) \ge c_z \|z_{0|t}^*\|^2$ for some $c_z > 0$. The term $\Delta J_{\text{err},t}$ represents the change in cost due to the prediction error $e_{1|t}$, and $\varepsilon_{t+1}^{\text{cand}}$ is the slack required for the candidate solution. Both terms can be upper-bounded by functions of $\|e_{1|t}\|$ and $\|e_{1|t}\|^2$. The error dynamics are driven by the nominal state and the disturbance, yielding $\mathbb{E}[\|e_{1|t}\|^2 | \mathcal{F}_t] \le \sigma_e \|x_t\|^2 + C_e$ for constants $\sigma_e, C_e > 0$. Taking the conditional expectation of the cost evolution, we obtain:
\begin{equation}
    \mathbb{E}[J_{t+1}^*(x_{t+1}) - J_t^*(x_t) | \mathcal{F}_t] \le -c_1 \|x_t\|^2 + C_J
    \label{eq:j_drift_final}
\end{equation}
where $c_1 > 0$ is a constant dependent on the MPC weighting matrices and system parameters, and $C_J$ is a constant dependent on the bound of the disturbance $d_k$.

2. Analysis of the Margin Term $(\beta_t - \beta^*)^2$:
We analyze the evolution of the squared error of the adaptive margin. The update law (25) is $\beta_{t+1} = \Pi_{\mathcal{B}} [ \beta_t - \alpha_t (\mathbb{I}(L_t) - \delta_l(t)) - \gamma_t(\beta_t - \bar{\beta}) ]$, where $\delta_l(t) = \delta + \hat{P}_t(B_t)$. The projection operator $\Pi_{\mathcal{B}}$ is non-expansive, thus its presence only decreases the squared error. We analyze the drift of the unprojected update for a worst-case bound:
\begin{align}
    (\beta_{t+1} - \beta^*)^2 &\le \left( (\beta_t - \beta^*) - \alpha_t(\mathbb{I}(L_t) - \delta_l(t)) - \gamma_t(\beta_t - \bar{\beta}) \right)^2 \nonumber \\
    \begin{split}
        &= (\beta_t - \beta^*)^2 \\
        &\quad - 2(\beta_t - \beta^*) \left( \alpha_t(\mathbb{I}(L_t) - \delta_l(t)) + \gamma_t(\beta_t - \bar{\beta}) \right) \\
        &\quad + R_t
    \end{split}
\end{align}
where $R_t$ collects all second-order terms, which are bounded since $\alpha_t, \gamma_t$ are bounded. Taking the conditional expectation yields the drift of the margin error term. Let $\tilde{\beta}_t = \beta_t - \beta^*$ denote the deviation from the ideal margin. The one-step drift can be expressed as:
\begin{align}
    \mathbb{E}[(\beta_{t+1} - \beta^*)^2 - (\beta_t - \beta^*)^2 | \mathcal{F}_t] \le  
    -2\alpha_t \tilde{\beta}_t \mathbb{E}[\mathbb{I}(L_t) - \delta_l(t) | \mathcal{F}_t] \nonumber
    \\-2\gamma_t \tilde{\beta}_t (\beta_t - \bar{\beta}) 
     + \bar{C}_{\beta}.
\end{align}
The core stochastic term, which represents the innovation of the learning process, can be analyzed as follows:
\begin{align}
    \mathbb{E}[\mathbb{I}(L_t) - \delta_l(t) | \mathcal{F}_t] =& P(L_t|\beta_t) - (\delta + P(B_t|\beta_t)) - \mathbb{E}[\nu_t|\mathcal{F}_t] \nonumber \nonumber\\
    =& \left(P(h(x)>0|\beta_t) + P(B_t|\beta_t)\right)\nonumber\\& - (\delta + P(B_t|\beta_t)) - \mathbb{E}[\nu_t|\mathcal{F}_t] \nonumber\nonumber \\
    =& P(h(x)>0|\beta_t) - P(h(x)>0|\beta^*)\nonumber \\&- \mathbb{E}[\nu_t|\mathcal{F}_t]
\end{align}
Using the Mean Value Theorem and Assumption \ref{assum:adaptive_mech}(ii), the systematic drift term is bounded: $P(h(x)>0|\beta_t) - P(h(x)>0|\beta^*) \le -c_p \tilde{\beta}_t$. Let $e_{\nu,t} = \mathbb{E}[\nu_t|\mathcal{F}_t]$ denote the conditional estimation error. Substituting these into the drift inequality gives:
\begin{align}
    \mathbb{E}[\Delta \tilde{\beta}_t^2 | \mathcal{F}_t] &\le -2\alpha_t \tilde{\beta}_t (-c_p \tilde{\beta}_t - e_{\nu,t})\nonumber\\& - 2\gamma_t \tilde{\beta}_t (\tilde{\beta}_t + \beta^* - \bar{\beta}) + \bar{C}_{\beta} \nonumber \\
    &= -2\alpha_t c_p \tilde{\beta}_t^2 + 2\alpha_t \tilde{\beta}_t e_{\nu,t} \nonumber\\&- 2\gamma_t \tilde{\beta}_t^2 - 2\gamma_t \tilde{\beta}_t (\beta^* - \bar{\beta}) + \bar{C}_{\beta}.
    \label{eq:beta_drift_intermediate}
\end{align}
Applying Young's inequality, $2ab \le \eta a^2 + b^2/\eta$, to the cross terms involving $|\beta_t - \beta^*|$:
\begin{equation}
    \mathbb{E}[\Delta (\beta_t - \beta^*)^2 | \mathcal{F}_t] \le -c_2' (\beta_t - \beta^*)^2 + C_{\beta},
    \label{eq:beta_drift_final}
\end{equation}
where $c_2' = 2\alpha_t c_p + 2\gamma_t - \eta_1 - \eta_2 > 0$ for sufficiently small positive $\eta_1, \eta_2$, and $C_\beta$ is a constant dependent on $\nu_{\max}^2$ and other parameters.

3. Combined Lyapunov Drift Analysis:
Combining the results from \eqref{eq:j_drift_final} and \eqref{eq:beta_drift_final}, the total drift of the augmented Lyapunov function $V(x_t, \beta_t)$ is:
\begin{align}
    \mathbb{E}[\Delta V_t | \mathcal{F}_t] \le& (-c_1 \|x_t\|^2 + C_J) \nonumber\\&+ c_\beta(-c_2'(\beta_t - \beta^*)^2 + C_\beta) \nonumber \\
    =& -c_1 \|x_t\|^2 - c_\beta c_2' (\beta_t - \beta^*)^2 \nonumber\\&+ (C_J + c_\beta C_\beta).
\end{align}
Let $c_2 = c_\beta c_2'$. The inequality can be written as:
\begin{equation}
    \mathbb{E}[\Delta V_t | \mathcal{F}_t] \le -c_1 \|x_t\|^2 - c_2 (\beta_t - \beta^*)^2 + C_{\text{total}}.
\end{equation}
This expression shows that for states $(x_t, \beta_t)$ outside a specific compact set around $(0, \beta^*)$, the expected one-step drift of the Lyapunov function is negative. By the theory of stochastic stability for discrete-time processes, this condition implies that the process is ultimately bounded in mean square. The size of the ultimate bound region is proportional to the constant term $C_{\text{total}}$, which itself depends on the bounds of the system disturbances and the estimation error of the learning mechanism. This concludes the proof.
\end{proof}

\subsection{Probabilistic Constraint Satisfaction}

The primary objective of the RAAR-MPC framework is to ensure that the system satisfies its operational constraints probabilistically over long horizons. This section provides a formal proof of this property, which synthesizes the probabilistic coverage of the LPES from Lemma \ref{lemma:lpes_coverage} with the closed-loop stability guarantees from Theorem \ref{thm:stochastic_stability}. The analysis is presented in two parts: first, a guarantee over the finite prediction horizon conditional on the learning outcome, and second, a proof of long-term convergence of the empirical risk to the desired level $\delta$.

\begin{theorem}[Probabilistic Constraint Satisfaction]
\label{thm:prob_satisfaction}
Consider the system (1) under the RAAR-MPC control law. Subject to Assumptions 1-5, the following properties hold:
\begin{enumerate}
    \item[\textit{i)}] (Finite-Horizon Guarantee) At any time $t$, if the optimization problem (8) admits a solution with zero slack, $\varepsilon_t=0$, then the state and input constraints are satisfied over the prediction horizon $\{0, \dots, N-1\}$ with a probability of at least $1 - \epsilon_{\text{LPES}}$. Formally,\(\mathbb{P} ( \forall k \in \{0, \dots, N-1\}: x_{t+k} \in\mathcal{X} \text{ and } u_{t+k} \in \mathcal{U} ) \ge 1 -\epsilon_{\text{LPES}}\).
   
    \item[\textit{ii)}] (Long-Term Risk Convergence) The long-term empirical frequency of physical constraint violations converges in probability to the user-specified risk level $\delta$. That is, for any $\nu > 0$:
    \begin{equation}
        \lim_{T \to \infty} \mathbb{P} \left( \left| \frac{1}{T}\sum_{t=0}^{T-1} \mathbb{I}(h(x_t)>0) - \delta \right| > \nu \right) = 0.
    \end{equation}
\end{enumerate}
\end{theorem}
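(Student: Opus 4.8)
Here is how I would attack the statement; the two parts require quite different machinery, so I would treat them in turn.

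\smallskip
\noindent\textbf{Part (i): finite-horizon guarantee.} The plan is simply to push the probabilistic coverage of Lemma~\ref{lemma:lpes_coverage} through the tube decomposition. If \eqref{eq:robust_mpc_qp} is solved with $\varepsilon_t=0$, the optimal nominal trajectory satisfies $z^*_{k|t}\in\mathcal{X}\ominus\mathcal{U}_{\text{total},k}(t)$ and $v^*_{k|t}\in\mathcal{U}\ominus K_e\mathcal{U}_{\text{total},k}(t)$ for $k=0,\dots,N-1$; under the ancillary law the realized signals are $x_{t+k}=z^*_{k|t}+e_{k|t}$ and $u_{t+k}=v^*_{k|t}+K_e e_{k|t}$ with $e_{0|t}=0$. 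Hence the single event $\{\,e_{k|t}\in\mathcal{U}_{\text{total},k}(t)\ \forall k\,\}$ already forces $x_{t+k}\in\mathcal{X}$ and $u_{t+k}\in\mathcal{U}$ for every $k$, by definition of the Pontryagin difference. Since $\mathcal{S}_k(t)\subseteq\mathcal{S}_k(t)\oplus B(\beta_t)=\mathcal{U}_{\text{total},k}(t)$, the horizon-wide coverage event of Lemma~\ref{lemma:lpes_coverage} is contained in the event above, so the bound follows with $\epsilon_{\text{LPES}}$ equal to the $\epsilon$ of the lemma (and, if one declines to pre-condition on the favourable GP event, with $\epsilon_{\text{LPES}}=\epsilon+\delta_c$).

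\smallskip
\noindent\textbf{Part (ii): long-term risk convergence.} Rather than try to pin down an exact limit of $\beta_t$, I would run a telescoping/time-averaging argument on \eqref{eq:final_beta_update}. Ignoring the projection and taking the primary gain constant for concreteness, $\beta_{t+1}-\beta_t=-\alpha e_{\text{SA}}(t)-\gamma_t(\beta_t-\bar\beta)$, so summing over $t=0,\dots,T-1$ and using $\beta_t\in[0,\beta_{\max}]$ gives $\tfrac1T\sum_t e_{\text{SA}}(t)=\tfrac{\beta_0-\beta_T}{\alpha T}-\tfrac1{\alpha T}\sum_t\gamma_t(\beta_t-\bar\beta)$, whose magnitude is at most $\tfrac{\beta_{\max}}{\alpha T}+\tfrac{\beta_{\max}}{\alpha}\tfrac1T\sum_t\gamma_t$ and therefore tends to zero along every sample path provided $\gamma_t\to0$. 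Next I expand $e_{\text{SA}}(t)=\mathbb{I}(h(x_t)>0)+\mathbb{I}(B_t)-\delta-\hat{\mathbb{P}}_t(B_t)$ using $\mathbb{I}(L_t)=\mathbb{I}(h(x_t)>0)+\mathbb{I}(B_t)$; because $\hat{\mathbb{P}}_t(B_t)$ in \eqref{eq:buffer_prob_estimate} is the sliding-window average of $\mathbb{I}(B_i)$, the identity $\tfrac1T\sum_t\hat{\mathbb{P}}_t(B_t)=\tfrac1T\sum_t\mathbb{I}(B_t)+O(W/T)$ is a purely combinatorial rearrangement. Combining the two displays yields $\big|\tfrac1T\sum_t\mathbb{I}(h(x_t)>0)-\delta\big|\to0$ almost surely, hence in probability. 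An equivalent route replaces the first step by the decomposition $\mathbb{I}(h(x_t)>0)=\mathbb{P}(h(x_t)>0\mid\mathcal{F}_t)+M_t$ with $M_t$ a bounded martingale difference, kills $\tfrac1T\sum_t M_t$ by the martingale SLLN, and then uses Assumption~\ref{assum:adaptive_mech}(ii) together with the drift estimate inside Theorem~\ref{thm:stochastic_stability} to obtain $\tfrac1T\sum_t|\beta_t-\beta^*|\to0$.

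\smallskip
\noindent\textbf{Where the difficulty lies.} I expect the main obstacle to be controlling the two residual biases so that the limit is \emph{exactly} $\delta$ and not merely a small neighbourhood of it: the mean-reversion term must satisfy $\gamma_t\to0$ (otherwise it leaves an $O(\gamma/\alpha)$ offset, which is exactly the gap one sees if one only keeps the ultimate-boundedness estimate $B_\beta$ of Theorem~\ref{thm:stochastic_stability}), and the buffer estimator \eqref{eq:buffer_prob_estimate} must be consistent enough that its time-average matches the empirical frequency of $B_t$ — the sliding-window rearrangement above discharges the latter for fixed $W$, but under a genuinely non-stationary disturbance the ``true'' $\beta^*$ becomes time-varying and the statement has to be recast as tracking a slowly drifting target. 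A secondary technical nuisance is the projection $\Pi_{\mathcal{B}}$: the telescoping identity picks up correction terms whenever $\beta_t$ touches an endpoint of $[0,\beta_{\max}]$, so one needs $\beta^*\in\operatorname{int}[0,\beta_{\max}]$ plus the ultimate boundedness of Theorem~\ref{thm:stochastic_stability} to argue the projection is asymptotically inactive along almost every path. Finally, the whole argument rests on Assumption~\ref{assum:adaptive_mech}(ii) to collapse the closed-loop state statistics into the scalar map $\beta\mapsto\mathbb{P}(h(x)>0\mid\beta)$; establishing that map from the coupled $(x_t,\beta_t)$ recursion itself would be the genuinely hard part and is deliberately kept outside the scope of this proof.
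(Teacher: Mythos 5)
Your part~(i) matches the paper's in substance: both reduce constraint satisfaction over the horizon to the joint coverage event $\bigcap_k\{e_{k|t}\in\mathcal{S}_k(t)\}$ and use $\mathcal{S}_k(t)\subseteq\mathcal{U}_{\text{total},k}(t)$ (since $\beta_t\ge 0$) plus the definition of the Pontryagin difference; the only cosmetic difference is that the paper re-derives the joint bound from per-step failure probabilities via a union bound with budget $\epsilon_{\text{LPES}}/N$, whereas you invoke the joint statement of Lemma~\ref{lemma:lpes_coverage} directly, which is arguably the more faithful use of that lemma. Your part~(ii), however, is a genuinely different route. The paper argues through heavy asymptotic machinery: it characterizes the root of $\mathbb{E}[e_{\text{SA}}(t)\mid\mathcal{F}_t]=0$ as the point where $P(h(x)>0\mid\beta)=\delta$, appeals to the stability of Theorem~\ref{thm:stochastic_stability} to assert ergodicity of the joint process $(x_t,\beta_t)$, invokes Robbins--Monro/Kushner--Clark convergence of $\beta_t$ to $\beta^*$ using the monotonicity in Assumption~\ref{assum:adaptive_mech}(ii), and finishes with the Birkhoff ergodic theorem. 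Your telescoping argument instead sums \eqref{eq:final_beta_update} pathwise, uses only the boundedness $\beta_t\in[0,\beta_{\max}]$ to force the Ces\`aro average of $e_{\text{SA}}(t)$ to vanish, and cancels the buffer indicators against the sliding-window estimator \eqref{eq:buffer_prob_estimate} by a purely combinatorial rearrangement. What this buys is considerable: it targets exactly the quantity in the theorem (the empirical violation frequency) without needing stationarity, ergodicity, or even the monotonicity assumption, and it sidesteps the paper's somewhat delicate claim that a process driven by explicitly non-stationary disturbances admits a stationary ergodic measure. What it costs is that the conclusion degrades to an $O(\gamma/\alpha)$ neighbourhood of $\delta$ unless $\gamma_t\to 0$ in Ces\`aro mean (the paper's constant $\gamma_0$ in the experiments would leave exactly this bias), it requires the projection $\Pi_{\mathcal{B}}$ to be asymptotically inactive, and it yields no information about the convergence of $\beta_t$ itself --- all of which you correctly flag. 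A further subtlety you gloss over by fixing $\alpha$ constant: with a decaying Robbins--Monro gain $\alpha_t\to 0$ the telescoped sum becomes $\sum_t\alpha_t e_{\text{SA}}(t)$, and passing from this weighted average to the unweighted empirical frequency needs an additional Abel-summation or Kronecker-lemma step. Overall the proposal is sound and, for part~(ii), arguably more self-contained than the paper's own proof.
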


\begin{proof}
The assertion of part (i) is that, given a feasible solution to the optimization problem (8) with $\varepsilon_t=0$, the resulting control action ensures that the system's state and input constraints are satisfied over the prediction horizon $k \in \{0, \dots, N-1\}$ with a joint probability of at least $1-\epsilon_{\text{LPES}}$.

Let the true, but unknown, sequence of uncertainties over the horizon be denoted by $\zeta \in \mathcal{U}^N$. The closed-loop system dynamics, when initiated from state $x_t$ with the optimal plan from (8), will produce a prediction error trajectory $\{e_{k|t}(\zeta)\}_{k=0}^{N-1}$. The satisfaction of the state constraint $x_{t+k} \in \mathcal{X}$ and input constraint $u_{t+k} \in \mathcal{U}$ is guaranteed if the error $e_{k|t}$ is contained within the total uncertainty set $\mathcal{U}_{\text{total},k}(t) = \mathcal{S}_k(t) \oplus \mathcal{B}(\beta_t)$. Since the adaptive safety margin corresponds to a non-negative buffer ($\beta_t \ge 0$), a sufficient condition for constraint satisfaction for all $k$ is the joint event that the error trajectory is fully contained within the sequence of Learned Prediction-Error Sets (LPES). Let this event be denoted by $\mathcal{C}_{\text{joint}}$:
\begin{equation}
    \mathcal{C}_{\text{joint}}(\zeta) \triangleq \bigcap_{k=0}^{N-1} \left\{ e_{k|t}(\zeta) \in \mathcal{S}_k(t) \right\}.
\end{equation}
Our goal is to establish a lower bound on the probability of this event, $\mathbb{P}_{\zeta \sim P(\mathcal{U}^N)}[\mathcal{C}_{\text{joint}}(\zeta)]$.

We proceed by analyzing the probability of the complement event, $\neg\mathcal{C}_{\text{joint}}$, which corresponds to the occurrence of at least one coverage failure. This can be expressed as the union of single-step failure events:
\begin{equation}
    \neg\mathcal{C}_{\text{joint}}(\zeta) = \bigcup_{k=0}^{N-1} \left\{ e_{k|t}(\zeta) \notin \mathcal{S}_k(t) \right\}.
\end{equation}
By the union bound (Boole's inequality), the probability of this event is bounded by the sum of the probabilities of the constituent events:
\begin{equation}
    \mathbb{P}(\neg\mathcal{C}_{\text{joint}}) \le \sum_{k=0}^{N-1} \mathbb{P}\left(e_{k|t}(\zeta) \notin \mathcal{S}_k(t)\right).
    \label{eq:formal_union_bound}
\end{equation}
The term $\mathbb{P}(e_{k|t}(\zeta) \notin \mathcal{S}_k(t))$ denotes the marginal probability that a coverage failure occurs at prediction step $k$. As established in the proof of Lemma 1, the GP-based active learning framework is designed such that the probability of failing to identify a critical uncertainty scenario that would lead to a coverage violation is bounded. Specifically, for any desired single-step risk level $\epsilon'$, it is possible to configure the learning algorithm (i.e., select $N_{\text{cand}}$, $t_{\min}$, and the UCB parameters) such that for any step $k$:
\begin{equation}
    \mathbb{P}\left(e_{k|t}(\zeta) \notin \mathcal{S}_k(t)\right) \le \epsilon'.
\end{equation}
To achieve a joint probabilistic guarantee of $1-\epsilon_{\text{LPES}}$ over the horizon of length $N$, we can allocate the total risk budget $\epsilon_{\text{LPES}}$ uniformly across the $N$ steps. We thus set the required single-step reliability by choosing $\epsilon' = \epsilon_{\text{LPES}}/N$. Substituting this into the inequality \eqref{eq:formal_union_bound} yields:
\begin{equation}
    \mathbb{P}(\neg\mathcal{C}_{\text{joint}}) \le \sum_{k=0}^{N-1} \frac{\epsilon_{\text{LPES}}}{N} = \epsilon_{\text{LPES}}.
\end{equation}
This implies that the probability of the joint success event $\mathcal{C}_{\text{joint}}$ is bounded as follows:
\begin{equation}
    \mathbb{P}(\mathcal{C}_{\text{joint}}) = 1 - \mathbb{P}(\neg\mathcal{C}_{\text{joint}}) \ge 1 - \epsilon_{\text{LPES}}.
\end{equation}
Since the event $\mathcal{C}_{\text{joint}}$ is a sufficient condition for constraint satisfaction over the entire horizon, we have thus formally shown that $\mathbb{P}(\forall k \in \{0, \dots, N-1\}: x_{t+k} \in \mathcal{X} \land u_{t+k} \in \mathcal{U}) \ge 1 - \epsilon_{\text{LPES}}$. This concludes the proof of Theorem 4(i). 

To prove part \textit{ii)}, we analyze the stationary behavior of the stochastic approximation algorithm governing $\beta_t$. The update law (25) is designed to find a root of the expected value of its driving term, $e_{\text{sa}}(t) = \mathbb{I}(L_t) - \delta_l(t)$. The stationary points of the adaptation dynamics are characterized by the condition $\mathbb{E}[e_{\text{sa}}(t)|\mathcal{F}_t] = 0$. This gives:
\begin{equation}
    P(L_t|\beta_t) - (\delta + \hat{P}_t(B_t)) = 0.
\end{equation}
The stability and resulting ergodicity of the joint process $(x_t, \beta_t)$, established in Theorem \ref{thm:stochastic_stability}, implies that the system converges to a stationary distribution. Under this stationary distribution, the law of large numbers ensures that the estimator $\hat{P}_t(B_t)$ converges to the true conditional probability $P(B_t|\beta_t)$ for a sufficiently large window size $W$. The equilibrium condition thus becomes:
\begin{equation}
    P(L_t|\beta_t) - (\delta + P(B_t|\beta_t)) = 0.
\end{equation}
By decomposing the probability of the learning event, $P(L_t|\beta_t) = P(h(x)>0|\beta_t) + P(B_t|\beta_t)$, the equation simplifies to:
\begin{equation}
    P(h(x)>0|\beta_t) = \delta.
\end{equation}
This shows that the equilibrium point of the self-correcting mechanism corresponds to the state where the true physical constraint violation probability equals the target risk level $\delta$.

The convergence of $\beta_t$ to this equilibrium point $\beta^*$ is guaranteed by the properties of the update law. It constitutes a Robbins-Monro stochastic approximation scheme. The stability of the underlying physical system ensures that the process noise is bounded. This, combined with the strict monotonicity condition in Assumption \ref{assum:adaptive_mech}(ii), which provides a negative feedback structure, satisfies the conditions for convergence of such algorithms, as established in classical results like those of Kushner and Clark \cite{kushner2012stochastic}. Therefore, $\beta_t$ converges to a neighborhood of $\beta^*$.

Finally, the Birkhoff ergodic theorem applies to the stationary and ergodic process $\{\mathbb{I}(h(x_t)>0)\}$. This theorem states that the time average of this indicator function converges almost surely to its expectation under the stationary measure. Since the dynamics of $\beta_t$ ensure this stationary expectation is $\delta$, we have:
\begin{equation}
    \frac{1}{T}\sum_{t=0}^{T-1} \mathbb{I}(h(x_t)>0) \xrightarrow{\text{a.s.}} \mathbb{E}_{\pi}[\mathbb{I}(h(x)>0)] = \delta,
\end{equation}
where $\mathbb{E}_{\pi}$ denotes the expectation with respect to the stationary distribution and $\xrightarrow{\text{a.s.}}$ denotes almost sure convergence. Convergence in probability is a direct consequence of almost sure convergence.
\end{proof}

\section{Numerical Example}
\label{sec:numerical_example}

In this section, we evaluate the performance of the proposed Risk-Aware Adaptive Robust MPC (RAAR-MPC) framework. We aim to demonstrate its ability to precisely manage constraint satisfaction under non-stationary uncertainties while maintaining low operational costs. The framework is benchmarked against three state-of-the-art methods on a challenging DC-DC converter control problem, characterized by both parametric model uncertainty and time-varying disturbances.

\subsection{Simulation Setup}
We consider the linearized discrete-time model of a DC-DC converter, a widely used benchmark in stochastic and robust control literature \cite{lorenzen2017constraint,capone2025online}. The system configuration is detailed as follows:
\begin{itemize}
    \item System Dynamics: The nominal model is described by
    \begin{equation}
        A = \begin{bmatrix} 1 & 0.0075 \\ -0.143 & 0.996 \end{bmatrix}, \quad B = \begin{bmatrix} 4.798 \\ 0.115 \end{bmatrix}.
    \end{equation}
    
    \item Cost Function: The objective is to minimize a standard quadratic cost $l(x, u) = x^T Q x + u^T R u$, with weighting matrices $Q = \text{diag}([1, 10])$ and $R = 1$. The terminal cost matrix $P$ is obtained by solving the discrete-time algebraic Riccati equation.
    
    \item Constraints: The system is subject to a hard input constraint $|u_t| \le 0.2$ and a chance constraint on the first state variable, given by:
    \begin{equation}
        \text{Pr}\left( [1, 0]x_t \le 0 \right) \ge 1 - \delta,
    \end{equation}
    where $\delta$ is the user-defined risk level. We test for a range of $\delta$ values to assess performance across different risk tolerances.
    
    \item Uncertainty Formulation: Our simulation setup introduces two significant challenges to emulate realistic operational conditions:
    \begin{enumerate}
        \item \textit{Parametric Uncertainty:} The true system matrices $A_{\text{true}}$ and $B_{\text{true}}$ deviate from the nominal model used by the MPC, with a relative uncertainty of up to 5\% for each entry.
        \item \textit{Non-Stationary Disturbances:} The additive disturbance $w_t$ is drawn from a uniform distribution whose bounds change over time, creating a non-stationary environment. The simulation of $110,000$ steps is divided into five epochs with different disturbance scaling factors, ranging from severe ($2.5\times$) to mild ($0.5\times$), based on a baseline of $[-0.14, 0.14]$. This setup rigorously tests the controller's ability to adapt to changing operational conditions.
    \end{enumerate}
    
    \item RAAR-MPC Configuration: The proposed controller uses a prediction horizon of $N=10$. The risk assessment engine (LPES) is updated every $T_{\text{update}}=50$ steps, using $K_{\text{crit}}=10$ critical scenarios. The self-correcting risk regulation loop for the adaptive safety margin $\beta_t$ is configured with a primary learning rate $\alpha_0=0.05$ and a much smaller mean-reversion rate $\gamma_0=0.0001$, ensuring that risk tracking dominates while preventing drift. The history window for buffer probability estimation is set to $W=100$.
\end{itemize}

\subsection{Illustration of the RAAR-MPC Adaptation Mechanism}

To visualize the internal workings and demonstrate the effectiveness of our proposed RAAR-MPC framework, we conduct a detailed simulation run with a fixed target risk of $\delta = 0.1$. The system operates under the challenging non-stationary disturbance profile described in Section V-A. 

First, we present the overall closed-loop system performance in Figure~\ref{fig:system_response}. The top plot shows the evolution of the constrained state $x_{1,t}$. The trajectory's variance clearly correlates with the disturbance intensity across the five different epochs, demonstrating the controller's ability to react to changing conditions. The zoomed-in view highlights that constraint violations (red dots) are not eliminated but are carefully managed to occur at a frequency consistent with the target risk level. The bottom plot confirms that the control input $u_t$ remains strictly within its hard bounds throughout the entire simulation. This figure establishes that the framework successfully controls the system while adhering to its operational constraints in a probabilistic sense.

\begin{figure}[thpb]
    \centering
    \includegraphics[width=1.0\columnwidth]{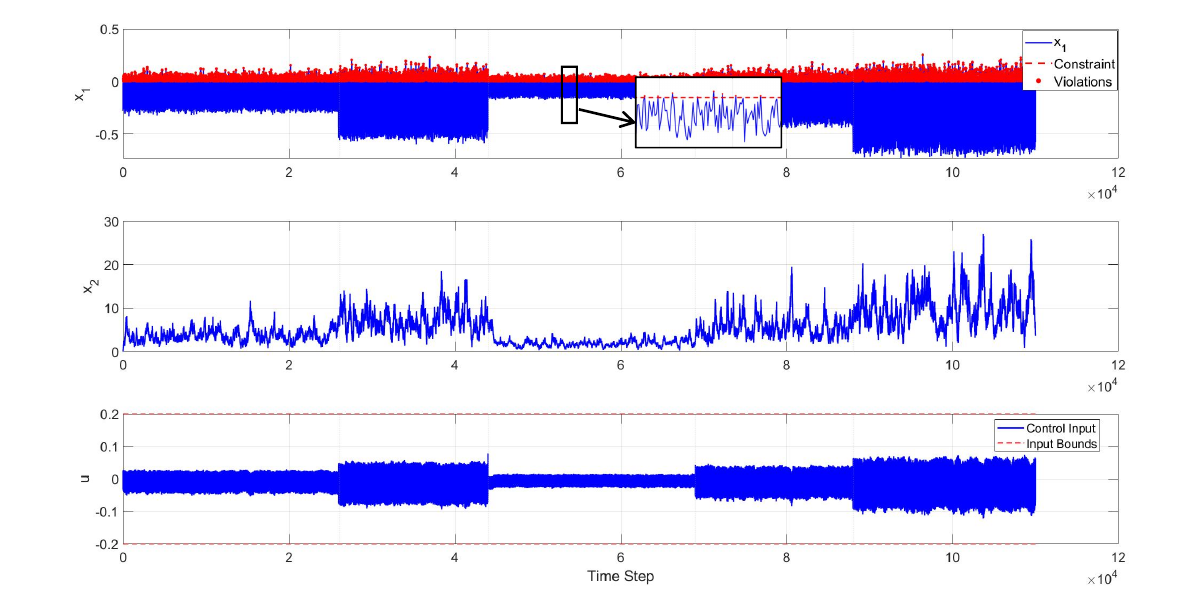} 
    \caption{Closed-loop system response under RAAR-MPC control for a target risk $\delta=0.1$. The figure shows the evolution of the states ($x_1, x_2$) and the control input ($u$) over 110,000 steps with non-stationary disturbances. The controller successfully manages the state constraints while respecting input bounds.}
    \label{fig:system_response}
\end{figure}

The key to this robust and efficient performance lies in the framework's dual-adaptive mechanism, which operates on two different timescales. The dynamic evolution of its two core components is illustrated in Figure~\ref{fig:adaptation_components}.
\begin{itemize}
    \item Proactive Adaptation (LPES):The top plot shows the size of the Learned Prediction-Error Set (LPES). This proactive component, updated by the medium-frequency learning loop, dynamically expands during high-disturbance periods and contracts when conditions are mild. This demonstrates the engine's ability to construct a tight, data-driven characterization of the predictable uncertainty.
    \item Reactive Adaptation (Safety Margin): The middle plot shows the evolution of the adaptive safety margin $\beta_t$. This reactive component, adjusted by the low-frequency, experience-driven loop, increases to enhance robustness when disturbances are high and decreases to reduce conservatism when they are low.
\end{itemize}
The bottom plot shows the net constraint tightening, which is the combined effect of both components. This synergy allows the system to mount a robust defense against uncertainty: the LPES handles the learned, structural part, while the safety margin provides a fast-reacting buffer against unmodeled or transient effects.

\begin{figure}[thpb]
    \centering
    \includegraphics[width=1.0\columnwidth]{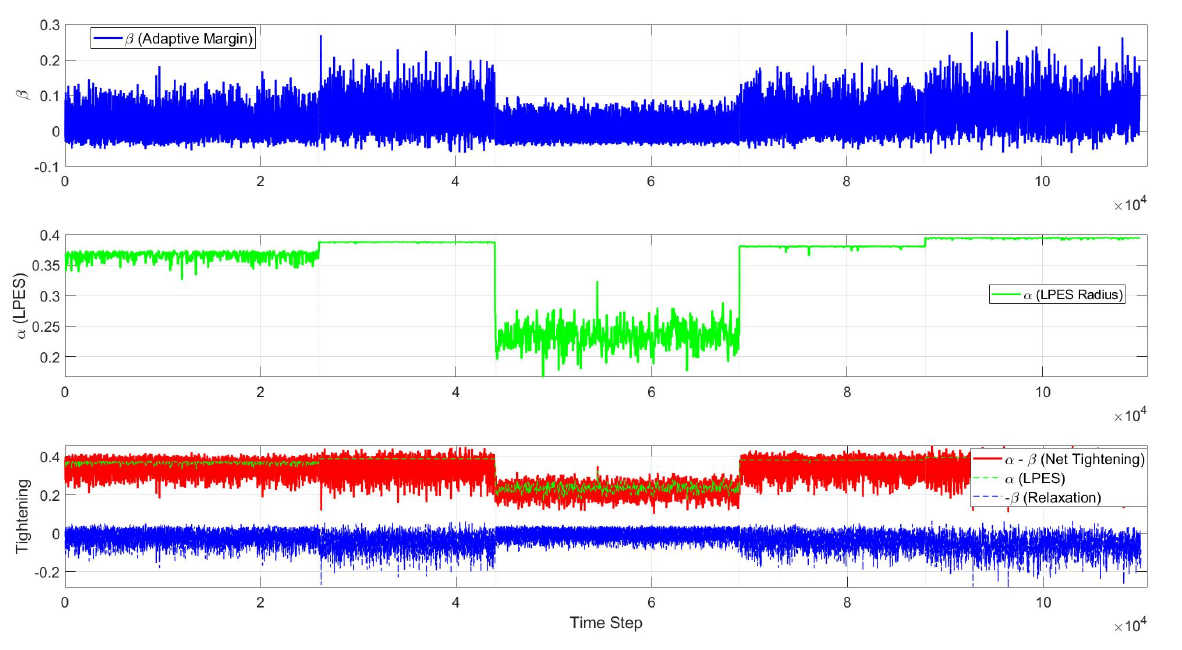} 
    \caption{Evolution of the dual-adaptive components. Top: The size of the proactive Learned Prediction-Error Set (LPES). Middle: The reactive Adaptive Safety Margin $\beta_t$. Bottom: The resulting net constraint tightening and its constituent parts.}
    \label{fig:adaptation_components}
\end{figure}

A critical innovation that enables the precise and robust performance of the reactive loop is the dynamic learning trigger, which addresses the "signal sparsity" problem inherent in learning from rare events. This mechanism is visualized in Figure~\ref{fig:safety_buffer}. Instead of waiting for rare physical violations ($h(x_t)>0$), our framework defines a "Safety Buffer" (the shaded green area) between the physical constraint and a dynamic learning boundary at $-m_s(t)$. A "learning event" is triggered whenever the state enters this buffer. Since these events are far more frequent than physical violations, they provide a rich and persistent feedback signal to the stochastic approximation algorithm governing $\beta_t$. This allows the controller to precisely regulate the empirical risk towards the desired level $\delta$ without delay.

\begin{figure}[thpb]
    \centering
    \includegraphics[width=1.0\columnwidth]{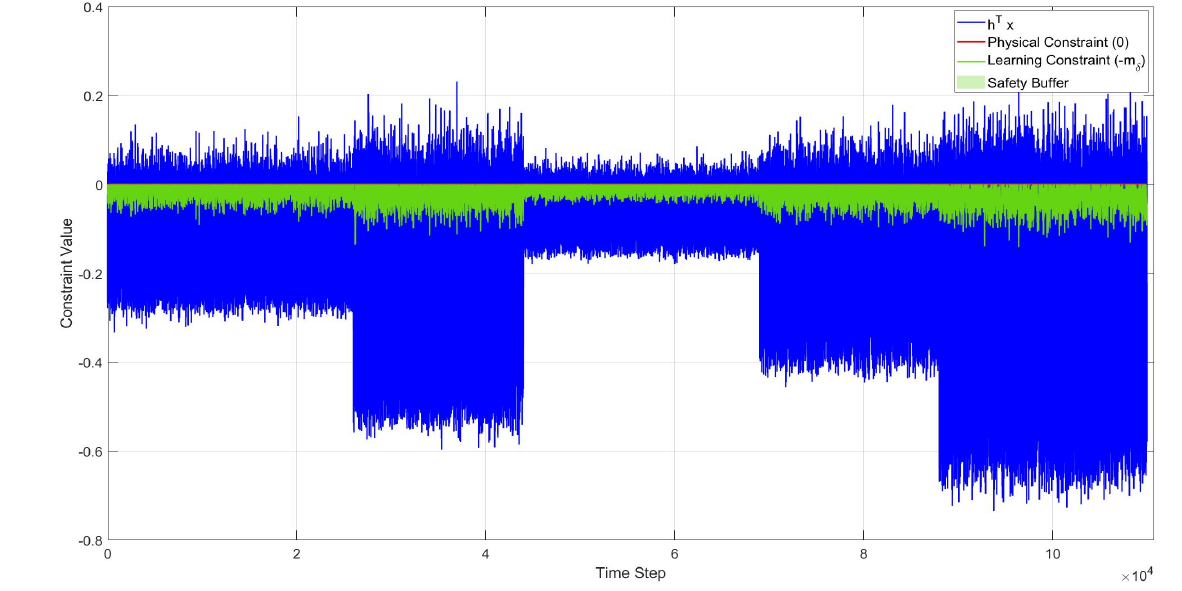}
    \caption{The self-correcting risk regulation mechanism. The "Safety Buffer" (shaded green) between the physical and learning constraints provides a frequent trigger for the adaptation of the safety margin $\beta_t$.}
    \label{fig:safety_buffer}
\end{figure}

In summary, these illustrations demonstrate that the successful overall system control shown in Figure~\ref{fig:system_response} is a direct result of a cohesive internal mechanism. The dual-adaptive components in Figure~\ref{fig:adaptation_components} work in synergy, and the effectiveness of the reactive component is greatly enhanced by the novel learning trigger concept shown in Figure~\ref{fig:safety_buffer}.

\subsection{Comparison With Existing Approaches}

We now benchmark our RAAR-MPC against other prominent control strategies. We select two state-of-the-art methods for comparison: the sampling-based approach by Lorenzen et al. \cite{lorenzen2017constraint} and the online estimation approch by Capone et al. \cite{capone2025online}. 

First, we evaluate the overall performance across seven different target risk levels $1-\delta \in \{0.6, 0.7, 0.8, 0.9, 0.95, 0.99\}$. The simulations are run for 110,000 time steps under the non-stationary disturbance setting. The aggregated results are summarized in Figure~\ref{fig:comparison}.

\begin{figure*}[t]
    \centering
    \begin{subfigure}[b]{0.48\textwidth}
        \centering
        \includegraphics[width=\textwidth]{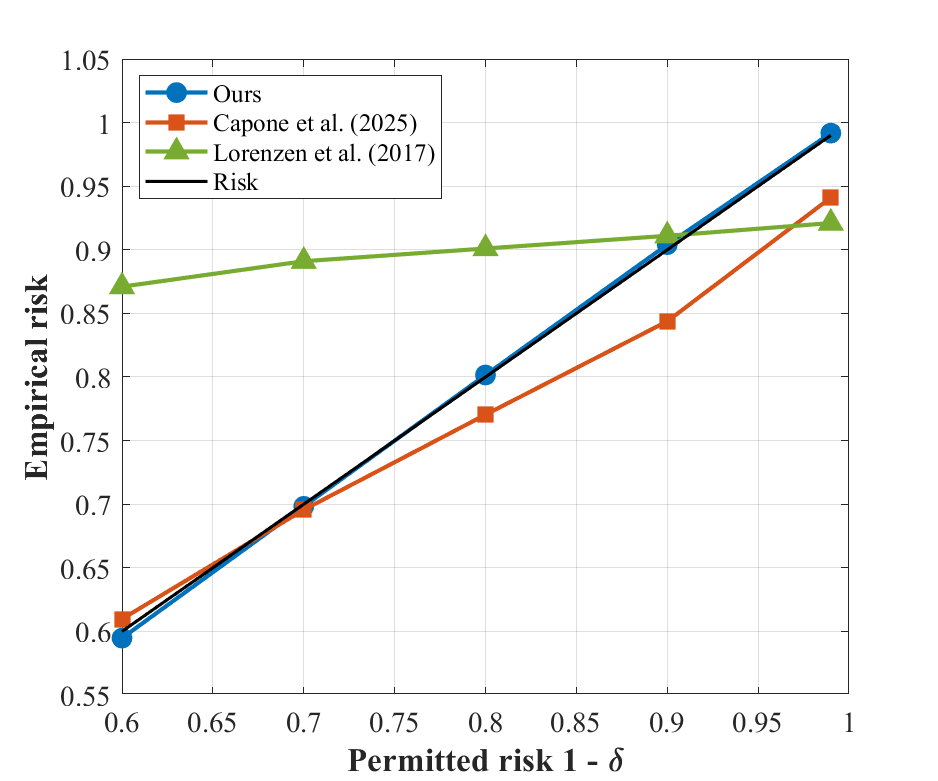}
        \caption{Empirical risk vs. permitted risk ($1-\delta$).}
        \label{fig:comparison_risk}
    \end{subfigure}
    \hfill
    \begin{subfigure}[b]{0.48\textwidth}
        \centering
        \includegraphics[width=\textwidth]{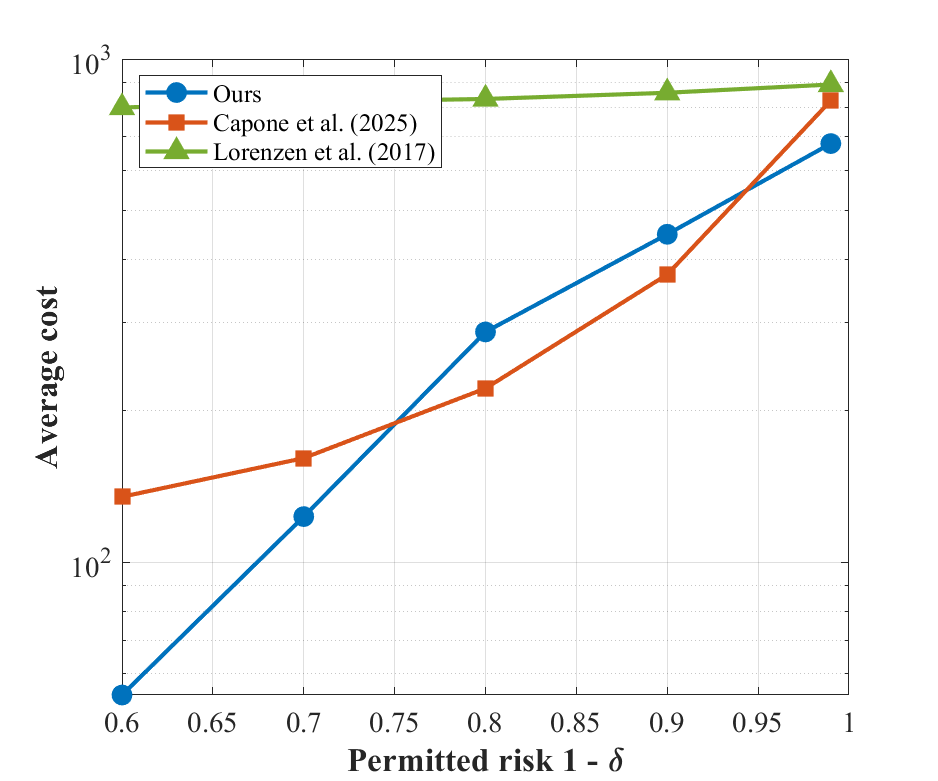}
        \caption{Average cost vs. permitted risk ($1-\delta$).}
        \label{fig:comparison_cost}
    \end{subfigure}
    \caption{Comparative performance analysis across different risk preferences. (a) This plot shows that our RAAR-MPC (blue circles) precisely tracks the target risk level (black dashed line). (b) This plot reveals that our precise risk tracking leads to significantly lower average costs.}
    \label{fig:comparison}
\end{figure*}

Figure~\ref{fig:comparison_risk} compares the empirical rate of constraint satisfaction against the permitted rate. Our RAAR-MPC (blue circles) demonstrates exceptional precision, with its data points lying almost perfectly on the target line, a direct result of the self-correcting adaptive margin $\beta_t$. In contrast, the other methods show conservative behavior. Figure~\ref{fig:comparison_cost} reveals the significant performance advantage of this precision. By avoiding unnecessary conservatism, our RAAR-MPC achieves a substantially lower average operational cost, especially at higher safety requirements (higher $1-\delta$).

To further investigate the robustness and adaptability of our framework—a crucial aspect for non-stationary environments—we conduct a more granular analysis. Figure~\ref{fig:robustness_comparison} shows the performance of each method under the five different disturbance epochs (represented by their scale) for a fixed target risk of $\delta=0.01$ (requiring 99\% satisfaction).

\begin{figure}[thpb]
    \centering
    \includegraphics[width=0.95\columnwidth]{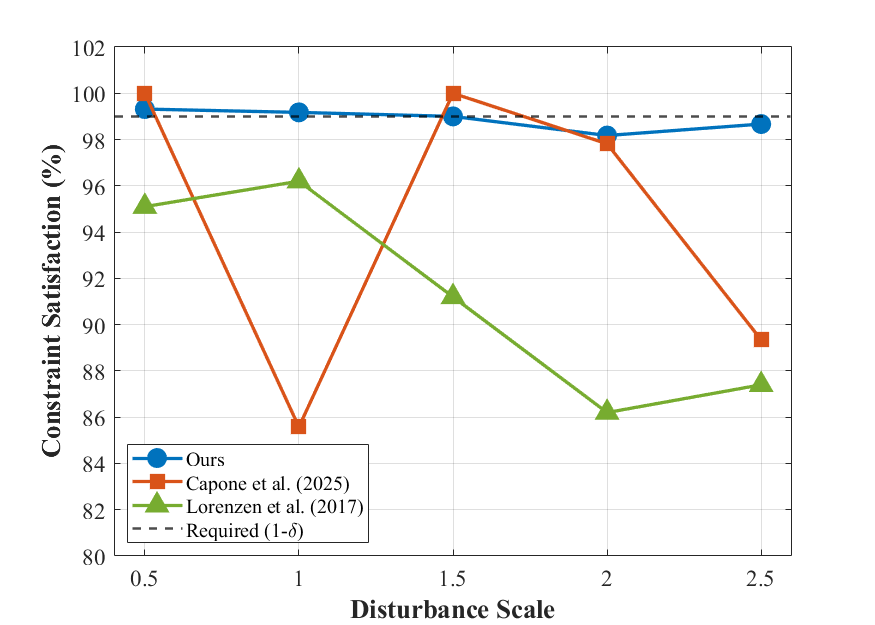}
    \caption{Robustness and adaptability comparison under varying disturbance conditions for a fixed target risk $\delta=0.01$. Our method (blue circles) consistently achieves the required 99\% satisfaction rate (dashed line), while competing methods exhibit significant deviations, indicating a lack of adaptability.}
    \label{fig:robustness_comparison}
\end{figure}

The results in Figure~\ref{fig:robustness_comparison} are striking. Our RAAR-MPC demonstrates remarkable robustness, maintaining the empirical satisfaction rate almost exactly at the required 99\% level across all disturbance scales. This is a direct consequence of the dual-adaptive mechanism, which effectively learns and compensates for the changing uncertainty characteristics in real time. In stark contrast, the performance of the competing methods is erratic. The method of Capone et al. \cite{capone2025online} is overly conservative at low disturbance but becomes significantly unsafe (satisfaction drops to ~85\%) when the disturbance profile changes. The approach of Lorenzen et al. \cite{lorenzen2017constraint} is consistently too aggressive (i.e., it violates the constraint more often than permitted) and its performance further deteriorates under more severe disturbances. This highlights their inability to effectively adapt their safety margins to the changing operational reality.

A crucial advantage of our RAAR-MPC is therefore evident: the dual-timescale adaptation provides a uniquely robust response. The low-frequency, reactive update of the safety margin $\beta_t$ responds almost instantaneously to any observed change in constraint satisfaction statistics, providing an immediate first line of defense. The medium-frequency LPES update then follows to characterize the new disturbance regime more accurately. This superior adaptability, directly evidenced by the stable performance across varying disturbance scales shown in Figure~\ref{fig:robustness_comparison}, ensures robust and efficient performance even during rapid transitions in the operating environment.

In summary, the comparative analysis clearly indicates that the proposed RAAR-MPC framework achieves a superior trade-off between safety and performance. It not only meets the specified chance constraints with high precision across various risk preferences, but also demonstrates exceptional robustness and adaptability under challenging non-stationary conditions, all while operating at a lower control cost.

\section{Conclusion}
\label{sec:conclusion}
In this paper, we have presented the Risk-Aware Adaptive Robust MPC (RAAR-MPC), a novel framework for controlling constrained linear systems subject to significant, non-stationary uncertainties. The key innovation lies in a dual-layer, multi-timescale architecture that decouples intelligent online risk assessment from the real-time control task, while linking them through a synergistic adaptive mechanism.

The framework employs a medium-frequency risk assessment engine, which leverages Gaussian processes and active learning to proactively identify critical uncertainty scenarios. This process constructs a tight, data-driven Learned Prediction-Error Set (LPES), which effectively reduces the conservatism inherent in traditional worst-case methods. Concurrently, a low-frequency, experience-driven risk regulation loop adjusts an adaptive safety margin based on closed-loop performance statistics. This ensures that the system precisely meets the user-defined chance constraint probability, robustly compensating for unmodeled dynamics and non-stationarities.

We have formally established the key theoretical properties of the RAAR-MPC framework, guaranteeing recursive feasibility by construction, closed-loop stability of the augmented system state (including the adaptive margin), and convergence of the empirical violation rate to the target risk level with high probability. The practical efficacy and superiority of our approach were demonstrated through extensive numerical simulations on a benchmark DC-DC converter under challenging non-stationary disturbance conditions. The results show that the RAAR-MPC not only achieved precise risk tracking across various risk levels but also did so at a significantly lower average operational cost compared to other state-of-the-art robust and stochastic control strategies.

Future work will proceed along two primary directions. First, extending the framework to handle uncertain nonlinear systems is a key objective. This will require the development of new methods for characterizing nonlinear prediction errors and establishing stability for the coupled physical and learning dynamics. Second, we aim to provide a more detailed theoretical analysis, including formal quantitative bounds on the convergence rate of the adaptive margin and the size of its ultimate invariant set, further strengthening the performance guarantees of the proposed approach.

\bibliographystyle{IEEEtran}
\bibliography{ref}

\begin{thebibliography}{10}
\providecommand{\url}[1]{#1}
\csname url@samestyle\endcsname
\providecommand{\newblock}{\relax}
\providecommand{\bibinfo}[2]{#2}
\providecommand{\BIBentrySTDinterwordspacing}{\spaceskip=0pt\relax}
\providecommand{\BIBentryALTinterwordstretchfactor}{4}
\providecommand{\BIBentryALTinterwordspacing}{\spaceskip=\fontdimen2\font plus
\BIBentryALTinterwordstretchfactor\fontdimen3\font minus \fontdimen4\font\relax}
\providecommand{\BIBforeignlanguage}[2]{{%
\expandafter\ifx\csname l@#1\endcsname\relax
\typeout{** WARNING: IEEEtran.bst: No hyphenation pattern has been}%
\typeout{** loaded for the language `#1'. Using the pattern for}%
\typeout{** the default language instead.}%
\else
\language=\csname l@#1\endcsname
\fi
#2}}
\providecommand{\BIBdecl}{\relax}
\BIBdecl

\bibitem{rawlings2017model}
J.~Rawlings, D.~Mayne, and M.~Diehl, \emph{Model Predictive Control: Theory, Computation, and Design}.\hskip 1em plus 0.5em minus 0.4em\relax Nob Hill Publishing Madison, 2017, vol.~2.

\bibitem{mayne2016robust}
D.~Mayne, ``Robust and stochastic model predictive control: Are we going in the right direction?'' \emph{Annual Reviews in Control}, vol.~41, pp. 184--192, 2016.

\bibitem{forbes2015model}
M.~Forbes, R.~Patwardhan, H.~Hamadah, and R.~Gopaluni, ``Model predictive control in industry: challenges and opportunities,'' \emph{IFAC-PapersOnLine}, vol.~48, no.~8, pp. 531--538, 2015.

\bibitem{kothare1996robust}
M.~Kothare, V.~Balakrishnan, and M.~Morari, ``Robust constrained model predictive control using linear matrix inequalities,'' \emph{Automatica}, vol.~32, no.~10, pp. 1361--1379, 1996.

\bibitem{boyd1994linear}
S.~Boyd, L.~El~Ghaoui, E.~Feron, and V.~Balakrishnan, \emph{Linear Matrix Inequalities in System and Control Theory}.\hskip 1em plus 0.5em minus 0.4em\relax SIAM, 1994.

\bibitem{scokaert1998min}
P.~Scokaert and D.~Mayne, ``Min-max feedback model predictive control for constrained linear systems,'' \emph{IEEE Transactions on Automatic Control}, vol.~43, no.~8, pp. 1136--1142, 1998.

\bibitem{diehl2007formulation}
M.~Diehl, ``Formulation of closed-loop min-max mpc as a quadratically constrained quadratic program,'' \emph{IEEE Transactions on Automatic Control}, vol.~52, no.~2, pp. 339--343, 2007.

\bibitem{ganguly2024exact}
S.~Ganguly and D.~Chatterjee, ``Exact solutions to minmax optimal control problems for constrained noisy linear systems,'' \emph{IEEE Control Systems Letters}, vol.~8, pp. 2063--2068, 2024.

\bibitem{lee1997worst}
J.~Lee and Z.~Yu, ``Worst-case formulations of model predictive control for systems with bounded parameters,'' \emph{Automatica}, vol.~33, no.~5, pp. 763--781, 1997.

\bibitem{bemporad2002explicit}
A.~Bemporad, M.~Morari, V.~Dua, and E.~Pistikopoulos, ``The explicit linear quadratic regulator for constrained systems,'' \emph{Automatica}, vol.~38, no.~1, pp. 3--20, 2002.

\bibitem{lorenzen2017constraint}
M.~Lorenzen, F.~Dabbene, R.~Tempo, and F.~Allg{\"o}wer, ``Constraint-tightening and stability in stochastic model predictive control,'' \emph{IEEE Transactions on Automatic Control}, vol.~62, no.~7, pp. 3165--3177, 2017.

\bibitem{mesbah2016stochastic}
A.~Mesbah, ``Stochastic model predictive control: An overview and perspectives for future research,'' \emph{IEEE Control Systems Magazine}, vol.~36, no.~6, pp. 30--44, 2016.

\bibitem{paulson2018nonlinear}
J.~Paulson and A.~Mesbah, ``Nonlinear model predictive control with explicit backoffs for stochastic systems under arbitrary uncertainty,'' \emph{IFAC-PapersOnLine}, vol.~51, no.~20, pp. 523--534, 2018.

\bibitem{bradford2020stochastic}
E.~Bradford, L.~Imsland, D.~Zhang, and E.~del Rio~Chanona, ``Stochastic data-driven model predictive control using gaussian processes,'' \emph{Computers \& Chemical Engineering}, vol. 139, p. 106844, 2020.

\bibitem{farina2016stochastic}
M.~Farina, L.~Giulioni, and R.~Scattolini, ``Stochastic linear model predictive control with chance constraints-a review,'' \emph{Journal of Process Control}, vol.~44, pp. 53--67, 2016.

\bibitem{mayne2011tube}
D.~Mayne, E.~Kerrigan, E.~van Wyk, and P.~Falugi, ``Tube-based robust nonlinear model predictive control,'' \emph{International Journal of Robust and Nonlinear Control}, vol.~21, no.~12, pp. 1341--1353, 2011.

\bibitem{blanchini1999set}
F.~Blanchini, ``Set invariance in control,'' \emph{Automatica}, vol.~35, no.~11, pp. 1747--1767, 1999.

\bibitem{kohler2021computationally}
J.~K{\"o}hler, R.~Soloperto, M.~M{\"u}ller, and F.~Allg{\"o}wer, ``A computationally efficient robust model predictive control framework for uncertain nonlinear systems,'' \emph{IEEE Transactions on Automatic Control}, vol.~66, no.~2, pp. 794--801, 2021.

\bibitem{manchester2017control}
I.~Manchester and J.~Slotine, ``Control contraction metrics: Convex and intrinsic criteria for nonlinear feedback design,'' \emph{IEEE Transactions on Automatic Control}, vol.~62, no.~6, pp. 3046--3053, 2017.

\bibitem{hewing2020scenario}
L.~Hewing and M.~Zeilinger, ``Scenario-based probabilistic reachable sets for recursively feasible stochastic model predictive control,'' \emph{IEEE Control Systems Letters}, vol.~4, no.~2, pp. 450--455, 2020.

\bibitem{calafiore2006scenario}
G.~Calafiore and M.~Campi, ``The scenario approach to robust control design,'' \emph{IEEE Transactions on Automatic Control}, vol.~51, no.~5, pp. 742--753, 2006.

\bibitem{tempo2005randomized}
R.~Tempo, G.~Calafiore, and F.~Dabbene, \emph{Randomized Algorithms for Analysis and Control of Uncertain Systems}.\hskip 1em plus 0.5em minus 0.4em\relax Springer, 2005.

\bibitem{vidyasagar1997theory}
M.~Vidyasagar, \emph{A Theory of Learning and Generalization: With Applications to Neural Networks and Control Systems}.\hskip 1em plus 0.5em minus 0.4em\relax Springer, 1997.

\bibitem{calafiore2013robust}
G.~Calafiore and L.~Fagiano, ``Robust model predictive control via scenario optimization,'' \emph{IEEE Transactions on Automatic Control}, vol.~58, no.~1, pp. 219--224, 2013.

\bibitem{bernardini2009scenario}
D.~Bernardini and A.~Bemporad, ``Scenario-based model predictive control of stochastic constrained linear systems,'' in \emph{2009 IEEE Conference on Decision and Control (CDC) held jointly with 2009 28th Chinese Control Conference}.\hskip 1em plus 0.5em minus 0.4em\relax IEEE, 2009, pp. 6333--6338.

\bibitem{campi2011sampling}
M.~Campi and S.~Garatti, ``A sampling-and-discarding approach to chance-constrained optimization: Feasibility and optimality,'' \emph{Journal of Optimization Theory and Applications}, vol. 148, no.~2, pp. 257--280, 2011.

\bibitem{campi2018general}
M.~Campi, S.~Garatti, and F.~Ramponi, ``A general scenario theory for nonconvex optimization and decision making,'' \emph{IEEE Transactions on Automatic Control}, vol.~63, no.~12, pp. 4067--4078, 2018.

\bibitem{gonzalez2023conditional}
E.~Gonz{\'a}lez, J.~Sanchis, J.~Salcedo, and M.~Mart{\'\i}nez, ``Conditional scenario-based model predictive control,'' \emph{Journal of the Franklin Institute}, vol. 360, no.~10, pp. 6880--6905, 2023.

\bibitem{schildbach2014scenario}
G.~Schildbach, L.~Fagiano, C.~Frei, and M.~Morari, ``The scenario approach for stochastic model predictive control with bounds on closed-loop constraint violations,'' \emph{Automatica}, vol.~50, no.~12, pp. 3009--3018, 2014.

\bibitem{rasmussen2006gaussian}
C.~Rasmussen and C.~Williams, \emph{Gaussian Processes for Machine Learning}.\hskip 1em plus 0.5em minus 0.4em\relax MIT press, 2006.

\bibitem{li2024data}
N.~Li, I.~Kolmanovsky, and H.~Chen, ``Data-driven predictive control with adaptive disturbance attenuation for constrained systems,'' \emph{Automatica}, vol. 161, p. 111456, 2024.

\bibitem{capone2025online}
A.~Capone, T.~Br{\"u}digam, and S.~Hirche, ``Online constraint tightening in stochastic model predictive control: A regression approach,'' \emph{IEEE Transactions on Automatic Control}, vol.~70, no.~2, pp. 736--751, 2025.

\bibitem{ganguly2025explicit}
S.~Ganguly and D.~Chatterjee, ``Explicit feedback synthesis driven by quasi-interpolation for nonlinear model predictive control,'' \emph{IEEE Transactions on Automatic Control}, 2025, to appear.

\bibitem{parisini1995receding}
T.~Parisini and R.~Zoppoli, ``A receding-horizon regulator for nonlinear systems and a neural approximation,'' \emph{Automatica}, vol.~31, no.~10, pp. 1443--1451, 1995.

\bibitem{kushner2012stochastic}
H.~J. Kushner and D.~S. Clark, \emph{Stochastic approximation methods for constrained and unconstrained systems}.\hskip 1em plus 0.5em minus 0.4em\relax Springer Science \& Business Media, 2012, vol.~26.

\end{thebibliography}

\end{document}